\documentclass[runningheads]{llncs}

\usepackage[comma,authoryear,round,longnamesfirst,sectionbib]{natbib}

\usepackage{graphicx}
\usepackage{wrapfig}
\usepackage{booktabs}
\usepackage{amsmath,amsfonts,amssymb}
\usepackage{stmaryrd}
\usepackage[ruled,linesnumbered, noend]{algorithm2e}
\usepackage{turnstile}
\usepackage{mathrsfs}
\usepackage{enumitem}
\usepackage{amsthm}
\usepackage[super]{nth}
\usepackage[toc,page]{appendix}
\usepackage{listings}
\usepackage{mathtools}
\usepackage{csquotes}
\usepackage{multicol}
\usepackage{fancyvrb}
\usepackage{makecell}
\usepackage{mdframed}
\usepackage[toc,page]{appendix}
\usepackage{mathdots}
\usepackage{empheq}
\usepackage{appendix}


\usepackage{array}
\newcolumntype{L}[1]{>{\raggedright\let\newline\\\arraybackslash\hspace{0pt}}m{#1}}
\newcolumntype{C}[1]{>{\centering\let\newline\\\arraybackslash\hspace{0pt}}m{#1}}
\newcolumntype{R}[1]{>{\raggedleft\let\newline\\\arraybackslash\hspace{0pt}}m{#1}}

\usepackage{tikz}
\usetikzlibrary{automata, positioning, calc, shapes, arrows, fit}

\newenvironment{claimproof}{\par\noindent\underline{Proof:}}{\leavevmode\unskip\penalty9999 \hbox{}\nobreak\hfill\quad\hbox{$\blacksquare$}}

\usepackage[unicode=true, bookmarks=false, breaklinks=true, pdfborder={0 0 1}, colorlinks=true, citecolor=blue]{hyperref}

\usepackage{pifont}
\definecolor{MyGreen}{rgb}{0, 0.7, 0}
\definecolor{MyRed}{rgb}{0.8, 0, 0}

\newcommand{\tuple}[1]{\left\langle #1 \right\rangle}

\newcommand{\agentSet}{\mathcal{N}}
\newcommand{\objSet}{\mathcal{O}}

\newcommand{\instBoth}{\ensuremath{\mathcal{I}}}
\newcommand{\instGood}{\ensuremath{\mathcal{I}^+}}
\newcommand{\instChore}{\ensuremath{\mathcal{I}^-}}

\newcommand{\GFoGood}{GEF1$^+$}
\newcommand{\GFoChore}{GEF1$^-$}

\newcommand{\squared}[1]{\fbox{#1}}





\title{Almost Group Envy-free Allocation \\of Indivisible Goods and Chores}

\author{Haris Aziz\inst{1} \and 
Simon Rey\inst{2}}


\institute{UNSW Sydney and Data61 Sydney, Australia  \\
\email{haziz@cse.unsw.edu.au} \and
Sorbonne Universit{\'e} and ENS Paris-Saclay, Paris, France \\
\email{srey@ens-paris-saclay.fr}}

\makeatletter
\newcommand{\@chapapp}{\relax}%
\makeatother

\begin{document}
	\maketitle
	
	\begin{abstract}
		We consider a multi-agent resource allocation setting in which an agent's utility may decrease or increase when an item is allocated. We take the group envy-freeness concept that is well-established in the literature and present stronger and relaxed versions that are especially suitable for the allocation of indivisible items. Of particular interest is a concept called group envy-freeness up to one item (GEF1).
		We then present a clear taxonomy of the fairness concepts. 
		We study which fairness concepts guarantee the existence of a fair allocation under which preference domain.
		For two natural classes of additive utilities, we design polynomial-time algorithms to compute a GEF1 allocation. 
		We also prove that checking whether a given allocation satisfies GEF1 is coNP-complete when there are either only goods, only chores or both. 
	\end{abstract}

\section{Introduction}

Fair division deals with the problem of assigning items to agents in the fairest way possible. Many fairness concepts have been proposed for this and envy-freeness (EF) is viewed as the gold standard one. It stipulates that for any pair of agents, no agent should prefer what the other got instead of her own share. However,
the concept does not provide any fairness guarantees when comparing groups of agents. Envy-freeness also has a well-known tension with efficiency goals~\citep{CKKK12}. 


\citet{BTD92} introduced \emph{group envy-freeness (GEF)}. It generalizes envy-freeness for equal-sized groups of agents instead of considering only pairs of agents. One particularly desirable aspect of group envy-freeness is that it implies both envy-freeness and Pareto-optimality, the central concepts for fairness and efficiency respectively. 

Recently, \citet{CFS+19} generalized group-envy freeness for indivisible items and groups of different size by introducing \emph{group-fairness (GF)}. Since they considered indivisible items, guaranteeing GF allocations is impossible. They thus proposed two relaxations of group-fairness. These relaxations are similar in spirit to the well-studied weakening of envy-freeness called \emph{envy-freeness up to one good (EF1)} which requires no envy between any two agents as long as one agent gets rids of one of her goods. Allocations satisfying the relaxations of group-fairness can always be satisfied and can be computed in pseudo-polynomial time.

In their paper, \citet{CFS+19} assumed that the items allocated are ``goods'' for which agents have positive utility. Therefore the concepts and results do not apply to allocations scenarios in which tasks or chores are to be divided among agents. Although the definition of group-fairness and group-envy freeness can be extended seamlessly when there are either only chores or goods and chores, their relaxations do not. In this paper, we take inspiration from the GF and GEF concepts and define several variants and relaxations of them that are well-defined for the more general setting of goods and chores. Our approach is similar in spirit to the work of \citet{ACI+18} who presented general definitions for fairness concepts that apply as well to the case of goods and for chores. 

\paragraph{Contributions} Our first conceptual contribution is to formalize relaxations of GEF for the case of goods and chores. We give general definitions which apply seamlessly to non-additive preference or even ordinal preferences. The main relaxation is GEF1 which we show to be incomparable to the relaxations of group-fairness (GF). 
A stronger counterpart of GEF1, called s-GEF1, is introduced when groups of different size are allowed.
We clarify the logical relations between these concepts through a clear taxonomy depicted in Figure \ref{fig:FainessCreiteriaLinks}.

We present two key existence and algorithmic results. First, we design a polynomial-time algorithm that always computes a GEF1 allocation for the case of identical utilities. The proof relies on interesting connections with two-sided matching and it invokes Hall's marriage theorem. 
We then focus on a natural class of mixed utilities called ternary symmetric utilities and design an algorithm that returns an allocation that satisfies GEF1. This algorithm involves network flows and makes use of several transformations of the utility functions. The results also provide additional insights on the connection between Nash social welfare and leximin welfare. 

We then show that GEF1 allocations do not always exist for monotone utilities even with only goods. We also show that allocations satisfying the stronger concepts and group-proportionality are not guaranteed to exist. 
	
We prove that checking whether a given allocation satisfies GEF1 is coNP-complete for the cases of only goods, only chores and both.

We give sketches of the proofs for most of our results to present the key ideas. The complete proofs are in the appendix.


\section{Related Work}
Fair division is a dynamic field in both economics and computer science \citep{BrTa96a, Moul04, BCM16, LaRo16}. The prominent fairness concept is envy-freeness (EF) \citep{Fole67} but an EF allocation may not exist for indivisible items. Checking whether there exists an EF allocation is NP-complete even for additive binary utilities~\citep{AGMW15a}.  Several relaxations have been considered to overcome this, in particular envy-freeness up to one good (EF1) \citep{Budi11} which always exists \citep{LMM+04}. \citet{CKM+16} proved the existence of allocation satisfying both EF1 and Pareto-optimality, and introduced envy-freeness up to any good (EFX).

\citet{BTD92} generalized envy-freeness to groups of agents by introducing \emph{group envy-freeness} when items are divisible, extending the idea of \emph{coalition fairness} \citep{ScVi72}.
An allocation is group envy-free if it is not possible to reallocation the items from one group of agents to another in a way that would Pareto-dominates the current allocation. They proved existence of GEF allocations under some monotonicity assumptions and showed the equivalence between EF and GEF. \citet{Huss11} extended it to \emph{weak group envy-freeness} by considering weak Pareto-improvement. 

Similar generalizations have been proposed with indivisible items. \citet{TLH+11} introduced \emph{envy-freeness of a group toward a group} when monetary transfers between agents are allowed. Later \citet{AlWa18} presented another definition of \emph{group envy-freeness} between groups of potentially different sizes. However, it relies on interpersonal comparisons which has received criticism in the social choice literature.
\citet{CFS+19} defined \emph{group-fairness} for indivisible goods, a definition similar to that of \citet{BTD92} but considering indivisible items and groups of different size. They also introduced two ``up to one'' relaxations of group-fairness for which they proved existence by using some variant of the Nash social welfare. Another line of work in the same spirit is to consider pre-existing groups of agents, taken as inputs of the procedures \citep{SeSu18, KSV19}. Similarly, \cite{BCEZ19} investigated the problem of allocating indivisible goods to agents partitioned into types.

The chore division problem \citep{Gard78}, extends the classical fair division setting for items that are considered as chores for some agents. \citet{BrTa96a} and  \citet{Sega18} investigated the cake-cutting problem in this setting. \citet{BMSY17} studied mixture of divisible goods and chores. Indivisible chores have also been considered \citep{ARSW17, BaKr17}. \citet{CKKK12} analysed the \emph{price of fairness} and showed several differences between goods and chores settings. 
\citet{ACI+18} presented a general framework for indivisible goods and chores. In particular, they provided a general definition for EF1 and algorithms for it.

\section{Preliminaries}

Let $\agentSet$ be a set of $n$ agents and $\objSet$ a set of $m$ items. Agent $i \in \agentSet$ has preferences over sets of items, called bundle, represented by a utility function $u_i : 2^\objSet \rightarrow \mathbb{R}$. We emphasize that agents can evaluate a bundle positively or negatively. Preferences are said to be \emph{additive} if for every subset of items $O \subseteq \objSet$, we have $u_i(O) = \sum_{o \in O} u_i(o)$. We assume additive preferences throughout the paper except explicitly stated otherwise. Our definitions can be applied to non-additive preferences. An item $o \in \objSet$ is a good for $i$ if $u_i(o) \geq 0$ and chore for $i$ if $u_i(o) \leq 0$. 

Let $O \subseteq \objSet$ be a subset of items and $N \subseteq \agentSet$ a subset of agents. An a\emph{llocation} $\pi = \tuple{\pi_1, \ldots, \pi_{|N|}}$ over $O$ and $N$ is a vector of bundles $\pi_i \subseteq O$ for $i \in N$. It satisfies indivisibility of the items, $\forall i, j \in N \text{ s.t. } i \neq j, \pi_i \cap \pi_j = \emptyset$, and non-wastefulness: $\bigcup_{i \in N} \pi_i = O$. For a subset of agents $N \subseteq \agentSet$, we denote by $\pi_N = \bigcup_{i \in N} \pi_i$ the set of items held by agents in $N$. We write respectively $\pi_i^+$ and $\pi_i^-$ the sets of goods and chores in $\pi_i$.

For a subset of items $O \subseteq \objSet$ and a subset of agents $N \subseteq \agentSet$, we denote by $\Pi(O, N)$ the set of all the allocations over $O$ and $N$. If $O \neq \objSet$, an allocation $\pi \in \Pi(O, \agentSet)$ is called \emph{partial}. A triplet $I = \tuple{\agentSet, \objSet, (u_i)_{i \in \agentSet}}$ is an instance. $\instBoth$ is the set of all the instances, $\instGood$ the set of instances with only goods and $\instChore$ the set of instances with only chores.

\medskip

Let $\pi$ be an allocation, we say that an allocation $\pi'$ \emph{Pareto-dominates} $\pi$ if all agents are better off in $\pi'$ and at least one agent is strictly better off: $\forall i \in \agentSet, u_i(\pi_i') \geq u_i(\pi_i)$ and $\exists i \in \agentSet, u_i(\pi_i') > u_i(\pi_i)$. An allocation $\pi$ is said to be \emph{Pareto-optimal} if no other allocation Pareto-dominates it. Another common efficiency criterion is to maximize the Nash social welfare, defined as $\prod_{i \in \agentSet} |u_i(\pi_i)|$.

\medskip

In the following, we introduce the definitions of envy-freeness and its relaxations when dealing with goods and chores as presented by \citet{ACI+18}.

\begin{definition}[Envy-freeness (EF)]
	Let $I = \tuple{\agentSet, \objSet, (u_i)_{i\in\agentSet}} \in \instBoth$ be an instance with both goods and chores.
	An allocation $\pi \in \Pi(\objSet, \agentSet)$ is envy-free if and only if: $\forall i, j \in \agentSet, u_i(\pi_i) \geq u_i(\pi_j)$.
\end{definition}

It is well known that for some instances envy-free allocations does not exist. Take for example two agents and one item. Two different relaxation of envy-freeness can then be considered.

We say that an allocation $\pi \in \Pi(\objSet, \agentSet)$ is envy-free up to one item (EF1) if:
$$\forall i, j \in \agentSet, \exists O \subseteq \pi_i \cup \pi_j, |O| \leq 1, \text{ s.t. } u_i(\pi_i\backslash O) \geq u_i(\pi_j\backslash O).$$

Moreover, an allocation $\pi \in \Pi(\objSet, \agentSet)$ is envy-free up to any item (EFX) if:
$$\forall i, j \in \agentSet, \left\{\begin{array}{l}
\forall o \in \pi_i \text{ s.t. } u_i(\pi_i) - u_i(\pi_i \backslash \{o\}) < 0, \quad u_i(\pi_i \backslash\{o\}) \geq u_i(\pi_j)\\
\forall o \in \pi_j \text{ s.t. } u_i(\pi_j) - u_i(\pi_j \backslash \{o\}) > 0, \quad u_i(\pi_i) \geq u_i(\pi_j\backslash\{o\}).
\end{array}\right.$$

\medskip

Finally, an allocation $\pi$ is proportional (PROP) if every agent is allocated her proportional share: $\forall i \in \agentSet, u_i(\pi_i) \geq \frac{u_i(\objSet)}{n}$.


\section{Fairness criteria for groups with goods and chores}

In this section, we present our first contributions: a general definition for group envy-freeness and its relaxations in the presence of goods and chores.

\begin{definition}[Group envy-freeness]
	Let $I = \tuple{\agentSet, \objSet, (u_i)_{i\in\agentSet}} \in \instBoth$ be an instance with goods and chores. An allocation $\pi \in \Pi(\objSet, \agentSet)$ is GEF if for every $S, T \subseteq \agentSet$ such that $|S| = |T| \neq 0$, there is no $\pi' \in \Pi\left(\pi_T, S\right)$, such that:
	$$\forall i \in S, \frac{|S|}{|T|}u_i\left(\pi'_i\right) \geq u_i(\pi_i),$$
	with one inequality being strict. 
	
	We call the concept s-GEF if we do not impose the condition $|S| = |T|$.
\end{definition}

In words, GEF states that there is no reallocation of $\pi_T$ to the agents in $S$ that would Pareto-dominates the current allocation for agents in $S$.

Note that s-GEF is equivalent to group-fairness \citep{CFS+19}. The name \emph{group envy-freeness} is taken from \citet{BTD92} who introduced it for divisible items.

In the same spirit of EF1 and EFX, we introduce ``up to one'' and ``up to any'' relaxations for group envy-freeness.

\begin{definition}[Group envy-freeness up to one item, GEF1]
	An allocation $\pi \in \Pi(\objSet, \agentSet)$ is GEF1 if for every $S, T \subseteq \agentSet$ where $|S|=|T|\neq 0$, for every $\pi' \in \Pi\left(\pi_T, S\right)$, and for every $i \in S$, there exists $O_i \subseteq \pi_i^- \cup \pi_i'^+$, $|O_i| \leq 1$, such that $\tuple{\frac{|S|}{|T|} u_i(\pi_i'\backslash O_i)}_{i \in S}$ does not Pareto-dominate $\tuple{u_i(\pi_i\backslash O_i)}_{i \in S}$.
	
	We talk about s-GEF1 if we do not impose the condition $|S| = |T|$.
\end{definition}

\begin{definition}[Group envy-freeness up to any item, GEFX]
	An allocation $\pi \in \Pi(\objSet, \agentSet)$ is GEFX if for every $S, T \subseteq \agentSet$ where $|S|=|T| \neq 0$, for every $\pi' \in \Pi\left(\pi_T, S\right)$, for every $i \in S$, for every $o_i \subseteq \pi_i^- \cup \pi_i'^+$, $\tuple{\frac{|S|}{|T|} u_i(\pi_i'\backslash \{o_i\})}_{i \in S}$ does not Pareto-dominate $\tuple{u_i(\pi_i\backslash \{o_i\})}_{i \in S}$.
	
	We call the concept s-GEFX if we do not impose the condition $|S| = |T|$.
\end{definition}

Observe that with additive utility functions and for instances in $\instGood$, GEF1 is equivalent to \emph{group fairness up to one good after} (GF1A) as defined by \citet{CFS+19}. However, they also proposed \emph{group fairness up to one good before} (GF1B) which is no longer relevant when there are chores since removing items cannot be done ``before''.
GEF1 can be seen as an argument in favour of GF1A.

Nevertheless, s-GEF1 is not equivalent to GF1A even when considering only goods because of the way allocations are compared. Formally, $\tuple{u_i(\pi_i \cup \{o\})}_{i \in S}$ is compared to $\tuple{\frac{|S|}{|T|} u_i(\pi_i')}_{i \in S}$ in GF1A while s-GEF1 compares $\tuple{u_i(\pi_i)}_{i \in S}$ and $\tuple{\frac{|S|}{|T|} u_i(\pi_i'\backslash\{o\})}_{i \in S}$. The factor $\frac{|S|}{|T|}$ is then applied differently. GF1A seems to be specific to additive preferences while our intent is to define concepts that can conveniently be used for both additive and non-additive preferences.

Following this aim for a general definition that is suitable for general preference domains, we only consider groups of the same size to obtain ordinal properties. This is in the same spirit of envy-freeness and allows for  more generality.
It can also be argued that comparisons between same-sized groups implicitly captures comparisons between different sized groups: for a given $k$, one can compare the best subgroup in $S$ of size $k$ with the worst subgroup of $T$ of size $k$.

\medskip

Although GEF implies Pareto-optimality and envy-freeness, GEF1 is more stringent than the combination of the two criteria. The following example illustrates that even on very restricted preference domain these two concepts do not imply GEF1.

\begin{example}
	\label{ex:PO + EF1 =/> GEF1}
	Consider the following instance with eight items, from $o_1$ to $o_8$, and four agents, $a_1, a_2, a_3, a_4$ whose preferences are additive and single-peaked with respect to the axis $\tuple{o_1, \ldots, o_8}$. The utilities for the singletons are as follows.
	\begin{center}
		\setlength{\tabcolsep}{6pt}
		\begin{tabular}{c|cccccccc}
			& $o_1$ & $o_2$ & $o_3$ & $o_4$ & $o_5$ & $o_6$ & $o_7$ & $o_8$\\
			\hline
			$a_1$ & $-1$ & $-1$ & 1 & 1 & 0 & 0 & 0 & 0 \\
			$a_2$ & 0 & 0 & 0 & 0 & 1 & 1 & $-1$ & $-1$ \\
			$a_3$ & \squared{1} & \squared{1} & \squared{1} & \squared{1} & 0 & 0 & 0 & 0 \\
			$a_4$ & 0 & 0 & 0 & 0 & \squared{1} & \squared{1} & \squared{1} & \squared{1}
		\end{tabular}
	\end{center}
	We call $\pi$ the allocation represented by the squared items. $\pi$ is clearly envy-free: agents $a_3$ and $a_4$ have their maximal utility and their bundles give 0 utility to agents $a_1$ and $a_2$. The allocation is moreover Pareto-optimal. However, $S = \tuple{a_1, a_2}$, $T = \tuple{a_3, a_4}$, $\pi'_{a_1} = \{o_3, o_4, o_7, o_8\}$ and $\pi'_{a_2} = \{o_1, o_2, o_5, o_6\}$, are witnesses of a violation of GEF1. Agents in $S$ are better off with $\pi'$ than with $\pi$ even after removing one good: utilities after removal are $\tuple{1, 1}$ against $\tuple{0, 0}$.
\end{example}

When preferences are additive, it is well known (see \cite{ACI+18} for example) that proportionality is a relaxation of envy-freeness. In a similar spirit, one can define \emph{group-proportionality}, a relaxation of GEF that extends proportionality to groups. It corresponds to GEF when $T$ is fixed and set to $\agentSet$. Note that it corresponds to the \emph{core} as defined by \cite{FMS18}.

\begin{definition}[Group Proportionality]
	An allocation $\pi \in \Pi(\objSet, \agentSet)$ is group proportional (GP) if and only if for every $S \subseteq \agentSet$, there is no $\pi' \in \Pi(\objSet, S)$, such that $\forall i \in S, \frac{|S|}{n}u_i\left(\pi'_i\right) \geq u_i(\pi_i)$, with at least one strict inequality.
\end{definition}

We can then define the usual relaxations, GP1 and GPX. An allocation $\pi \in \Pi(\objSet, \agentSet)$ satisfies group proportionality up to one item (GP1) if for every $S \subseteq \agentSet$, $\pi' \in \Pi(\objSet, S)$, $i \in S$ there exists $O_i \subseteq \pi_i^- \cup \pi_i'^+$, $|O_i| \leq 1$, such that $\tuple{\frac{|S|}{n} u_i(\pi_i'\backslash O_i)}_{i \in S}$ does not Pareto-dominate $\tuple{u_i(\pi_i\backslash O_i)}_{i \in S}$. Group proportionality up to any item (GPX) can then be defined naturally.

\medskip

To conclude this section, we present in Figure \ref{fig:FainessCreiteriaLinks} a taxonomy of the different criteria discussed before. The links between s-GEF, GEF, GP and their relaxations are immediately derived from the definitions. Envy-freeness concepts are implied by GEF and s-GEF when $S$ and $T$ are singletons. GP implies PROP when $S$ is a singleton. PO is implied by s-GEF, GEF and GP for $S = T = \agentSet$.
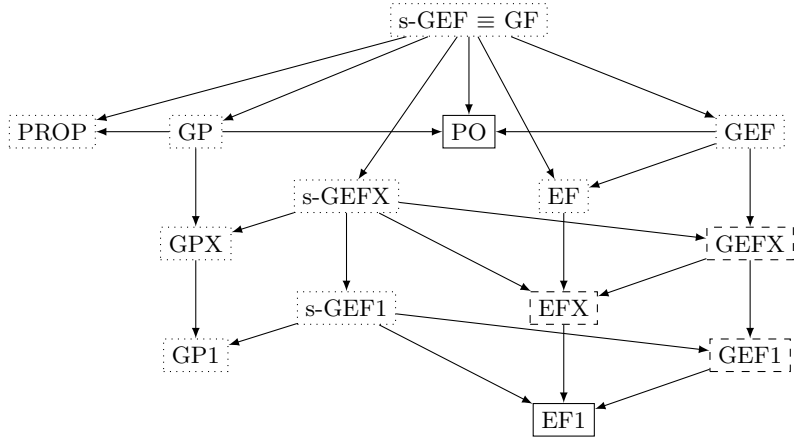
\begin{figure}[t]
	\centering
	\begin{tikzpicture}[node distance = 3.2em 	, > = latex]
	\node[draw, dotted] (s-GEF) {s-GEF $\equiv$ GF};
	
	\node[draw, below = of s-GEF] (PO) {PO};
	\node[below = 1.6em of PO] (tmpPO) {};
	
	\node[draw, right = 9em of PO, dotted] (GEF) {GEF};
	\node[draw, below = of GEF, dashed] (GEFX) {GEFX};
	\node[draw, below = of GEFX, dashed] (GEF1) {GEF1};
	
	\node[draw, left = 9em of PO, dotted] (GP) {GP};
	\node[draw, below = of GP, dotted] (GPX) {GPX};
	\node[draw, below = of GPX, dotted] (GP1) {GP1};
	
	\node[draw, left = 2.5em of tmpPO, dotted] (s-GEFX) {s-GEFX};
	\node[draw, below = of s-GEFX, dotted] (s-GEF1) {s-GEF1};
	
	\node[draw, right = 2.5em of tmpPO, dotted] (EF) {EF};
	\node[draw, below = of EF, dashed] (EFX) {EFX};
	\node[draw, below = of EFX] (EF1) {EF1};
	
	\node[draw, left = 3em of GP, dotted] (PROP) {PROP};
	
	
	\path[->] (s-GEF) edge [] (PO);
	\path[->] (s-GEF) edge [](GP);
	\path[->] (s-GEF) edge [] (PROP);
	\path[->] (s-GEF) edge [] (EF);
	
	\path[->] (GP) edge [] (PO);
	\path[->] (GP) edge [] (GPX);
	\path[->] (GPX) edge [] (GP1);
	
	\path[->] (EF) edge [] (EFX);
	\path[->] (EFX) edge [] (EF1);
	
	\path[->] (GP) edge [] (PROP);
	
	
	\path[->] (s-GEF) edge [] (s-GEFX);
	\path[->] (s-GEFX) edge [] (s-GEF1);
	\path[->] (s-GEFX) edge [] (GPX);
	\path[->] (s-GEF1) edge [] (GP1);
	\path[->] (s-GEFX) edge [] (EFX);
	\path[->] (s-GEF1) edge [] (EF1);
	
	\path[->] (s-GEF) edge [] (GEF);
	\path[->] (s-GEFX) edge [] (GEFX);
	\path[->] (s-GEF1) edge [] (GEF1);
	
	\path[->] (GEF) edge [] (EF);
	\path[->] (GEFX) edge [] (EFX);
	\path[->] (GEF1) edge [] (EF1);
	\path[->] (GEF) edge [] (GEFX);
	\path[->] (GEFX) edge [] (GEF1);
	\path[->] (GEF) edge [] (PO);
	
	\end{tikzpicture}
	\caption{Logical relationship between fairness and efficiency criteria. Allocations satisfying concepts in dotted (resp. plain) are not (resp. are) guaranteed to exist. No existence  result is known for concepts in dashed.}
	
	\label{fig:FainessCreiteriaLinks}
\end{figure}

\section{The Egal-Sequential Algorithm for Identical Utilities}

In this section, we present the Egal-Sequential Algorithm that returns a GEF1 allocation when preferences are identical. 
Identical preferences constitute an important and natural class of preferences especially if the item's values are objective or publicly known. The algorithm allocates sequentially the items in decreasing order of absolute utility. The item to be allocated is given to the worse off agent if it is a good and to the better off agent otherwise.
We prove that this ensures GEF1 in two steps. We first show that the Egal-Sequential Algorithm returns an EFX allocation and then that every EFX allocation is also GEF1 when preferences are identical. Preferences are identical if there exists a common utility function $u$ shared by all the agents: $\forall i \in \agentSet, u_i = u$.


\begin{lemma}
	\label{lemma:egalSeqAlgoCorrect}
	For identical utilities, the Egal-Sequential Algorithm returns an allocation $\pi$ satisfying EFX and takes time $\mathcal{O}(\max\{m \log m, mn\})$.
\end{lemma}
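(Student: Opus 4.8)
The plan is to prove the two claims separately, treating correctness by induction and the running time by direct accounting. Throughout, write $u$ for the common utility function and index the items $o_1, \dots, o_m$ so that $|u(o_1)| \ge \dots \ge |u(o_m)|$, the order in which the algorithm processes them. The structural fact I will exploit repeatedly is a twofold consequence of this ordering: because items are inserted in non-increasing order of absolute value, the last good placed into any bundle is the \emph{smallest} good in it and the last chore is the \emph{smallest-magnitude} chore (so the just-placed item is always the binding one for EFX); and the absolute value of the item currently being placed is at most that of every item already allocated.

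For correctness I would prove the stronger statement that \emph{every partial allocation maintained by the algorithm is EFX}, by induction on the number of placed items; EFX of the output then follows, as it is the final partial allocation. The base case (empty allocation) is immediate. For the inductive step, assume the current allocation $\pi$ is EFX and place $o_t$; only the recipient $j$'s bundle changes, so I re-examine only the pairs involving $j$. Suppose first $o_t$ is a good, so $j$ is currently worst off, $u(\pi_j) \le u(\pi_i)$ for all $i$. For the clause bounding the envy an agent $i \ne j$ feels toward $j$, the binding good to delete from $\pi_j \cup \{o_t\}$ is $o_t$ itself (it is the smallest good there), and deleting it leaves $u(\pi_j) \le u(\pi_i)$, as required; the clause letting $i$ discard one of its own chores $o$ is preserved because the increment $u(o_t)$ is at most the magnitude $|u(o)|$ of any chore already in $\pi_i$ (processed earlier), so $u(\pi_j) + u(o_t) \le u(\pi_i) + |u(o)| = u(\pi_i \setminus \{o\})$. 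Since adding a good to $j$ only decreases the envy $j$ feels toward everyone else, the pairs with $j$ as the envier stay EFX as well. The case where $o_t$ is a chore and $j$ is currently best off is entirely symmetric, interchanging the roles of goods and chores, of ``smallest'' and ``smallest-magnitude'', and of the two clauses.

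One situation deserving a remark is the degenerate case in which, for some pair $(i,j)$, neither clause applies, i.e. $\pi_i$ contains no chore and $\pi_j$ contains no good. With identical utilities this forces $u(\pi_i) \ge 0 \ge u(\pi_j)$, so $i$ does not envy $j$ at all and EFX holds trivially; hence the vacuous cases never conceal a genuine violation. The main obstacle is precisely the mixed regime handled in the inductive step: unlike the goods-only or chores-only settings, utilities are not monotone along the run, so I cannot simply argue that ``bundles only grow''. What rescues the argument is the second consequence of the processing order, namely that $|u(o_t)|$ is dominated by the magnitude of every previously placed item, which is exactly what bounds the cross terms such as $u(o_t)$ against $|u(o)|$ above.

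Finally, for the running time, sorting the $m$ items by absolute value costs $\mathcal{O}(m \log m)$, after which the algorithm performs $m$ placements, each requiring a scan over the $n$ current agent utilities to locate a worst-off or best-off agent, for a total of $\mathcal{O}(mn)$; combining the two gives $\mathcal{O}(\max\{m \log m, mn\})$.
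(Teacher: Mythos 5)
Your proof is correct and follows essentially the same route as the paper's: an induction showing that every partial allocation maintained by the algorithm is EFX, exploiting that the recipient is worst-off (for a good) or best-off (for a chore) and that the decreasing-absolute-value processing order makes the newly placed item the binding one, followed by the same sorting-plus-linear-scan runtime accounting. If anything, your argument is slightly more thorough than the paper's, since you explicitly verify both clauses of the EFX definition (removing a good from the envied bundle and removing a chore from the envier's own bundle), whereas the paper's write-up treats the second clause only implicitly via the ordering of absolute values.
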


\begin{proof}[Sketch of the proof]
	We show that throughout an execution of the algorithm, the partial allocation $\pi$ is always EFX. If a chore $o$ it to be allocated, it is given to the best off agent $i^*$. The only possible increase in the envy is for $i^*$ to start envying some agents. However, since preferences are identical, $i^*$ did not envy anyone before getting $o$. Hence, removing $o$ from $\pi_{i^*}$ eliminates $i^*$'s envy. Since $o$ is the smallest chore, $\pi$ is EFX. The case when $o$ is a chore is symmetric.
\end{proof}

\begin{algorithm}[t]
	\DontPrintSemicolon
	\KwIn{An instance $I = \tuple{\agentSet, \objSet, (u_i)_{i \in \agentSet})}$ with identical utility function $u$}
	\KwOut{$\pi \in \Pi(\objSet, \agentSet)$ an allocation satisfying EFX and GEF1}
	Set $\pi$ to the empty allocation\;
	Order items $o_1,\ldots, o_m$ in $\objSet$ in decreasing order of $|u(o)|$\;
	\For{$j=1$ to $m$}{
		\If{$u(o_j) \geq 0$}{
			Choose $i^* \in \arg \min_{i \in \agentSet} u(\pi_i)$ \;
		}
		\Else{Choose $i^* \in \arg \max_{i \in \agentSet} u(\pi_i)$ \;}
		Give $o_j$ to $i^*$: $\pi_{i^*} \gets \pi_{i^*} \cup \{o_j\}$ \;
	}
	\Return $\pi$ \;
	\caption{The Egal-Sequential Algorithm}
	\label{algo:egalgreedy}
\end{algorithm}

\begin{lemma}
	\label{lemma:EFX=>GEF1Identical}
	Under identical utilities, any allocation satisfying EFX is GEF1.
\end{lemma}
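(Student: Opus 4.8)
The plan is to fix arbitrary groups $S,T\subseteq\agentSet$ with $|S|=|T|\neq 0$ and an arbitrary reallocation $\pi'\in\Pi(\pi_T,S)$, and to produce a per-agent single-item removal that witnesses GEF1. Two simplifications come for free: since $|S|=|T|$ the factor $\frac{|S|}{|T|}$ equals $1$, and since utilities are identical every $u_i$ is the common function $u$. I would then use that a removal profile $(O_i)_{i\in S}$ destroys Pareto-domination as soon as a single coordinate fails the weak inequality; hence it suffices to find one agent $i\in S$ and one admissible $O_i$ with $u(\pi'_i\setminus O_i)<u(\pi_i\setminus O_i)$, or, if no strict drop is available anywhere, a profile making all coordinates equal.

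First I would compute, for each agent, the most advantageous removal. Deleting from $\pi_i'^+$ a good that $i$ already held in $\pi_i$, or deleting from $\pi_i^-$ a chore that also lies in $\pi'_i$, shifts both sides by the same amount and is useless; the only helpful moves are removing the most valuable good of $\pi_i'^+\setminus\pi_i$ or the most costly chore of $\pi_i^-\setminus\pi'_i$. Writing $\rho_i$ for the larger of these two magnitudes (and $\rho_i=0$ if neither exists), the smallest attainable gap is $u(\pi'_i)-u(\pi_i)-\rho_i$. Consequently, GEF1 can fail for this triple only when $u(\pi'_i)-\rho_i\ge u(\pi_i)$ for every $i\in S$ with at least one strict inequality, i.e. when the reduced profile $\langle u(\pi'_i)-\rho_i\rangle_{i\in S}$ Pareto-dominates $\langle u(\pi_i)\rangle_{i\in S}$. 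Summing over $S$ and using $\sum_{i\in S}u(\pi'_i)=u(\pi_T)$, this yields the strict inequality $u(\pi_T)-u(\pi_S)>\sum_{i\in S}\rho_i$, which I aim to contradict.

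It therefore remains to show that EFX forces $u(\pi_T)-u(\pi_S)\le\sum_{i\in S}\rho_i$. EFX under identical utilities gives, for every pair of agents, $u(\pi_j)-u(\pi_i)\le\underline g_j$ and $u(\pi_j)-u(\pi_i)\le\underline c_i$, where $\underline g_j$ is the least-valued good of $\pi_j$ and $\underline c_i$ the least-costly chore of $\pi_i$ (the relevant bound being vacuous, and the difference nonpositive, when the corresponding set is empty). My plan is to build a bipartite graph between $S$ and $T$ joining $i\in S$ to $j\in T$ whenever $u(\pi_j)-u(\pi_i)\le\rho_i$, and to verify Hall's condition from the structure of $\pi'$: the items that make an agent $j\in T$ expensive are redistributed by $\pi'$ among the agents of $S$, and absorbing them is exactly what creates the removable slack counted by $\rho_i$. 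A perfect matching $\sigma\colon S\to T$ then satisfies $u(\pi_{\sigma(i)})-u(\pi_i)\le\rho_i$ for all $i$, and summing gives $u(\pi_T)-u(\pi_S)=\sum_{i\in S}\bigl(u(\pi_{\sigma(i)})-u(\pi_i)\bigr)\le\sum_{i\in S}\rho_i$, the desired contradiction. The hard part will be precisely this matching step: $\rho_i$ is defined through the reallocated bundles while the EFX slack lives in the original bundles, so establishing Hall's condition — that every $A\subseteq S$ has at least $|A|$ neighbours in $T$ — requires charging the high-value agents of $T$ against the goods they contribute to $\pi'$, treating goods and chores symmetrically and handling the degenerate all-good, all-chore, and empty bundles separately.
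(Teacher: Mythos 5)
Your setup follows the same skeleton as the paper's proof: define a per-agent slack ($\rho_i$; the paper calls it $s_i$), show that a GEF1 violation forces the strict inequality $u(\pi_T)-u(\pi_S)>\sum_{i\in S}\rho_i$, extract from EFX the pairwise bounds $u(\pi_j)-u(\pi_i)\le \min\{\underline g_j,\underline c_i\}$, and contradict the first inequality by summing the pairwise bounds over a perfect matching between $S$ and $T$ obtained via Hall's theorem. Your observation that removals of items shared by $\pi_i$ and $\pi'_i$ are useless, so only $\pi_i'^+\setminus\pi_i$ and $\pi_i^-\setminus\pi'_i$ matter, is sound and in fact tightens an imprecision in the paper's own statement of the violation inequality.

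The genuine gap is that the entire content of the lemma lives in the step you defer. You define the bipartite graph by the target inequality itself, joining $i\in S$ to $j\in T$ whenever $u(\pi_j)-u(\pi_i)\le\rho_i$, and then assert that Hall's condition should follow by ``charging'' high-value agents of $T$ against the goods they contribute to $\pi'$. With that edge set, Hall's condition is essentially a restatement of what must be proved, and nothing in your sketch shows how the violation inequalities and the EFX inequalities combine to verify it. The paper instead defines edges combinatorially: $(i,j)\in E$ iff $\pi_i'^+\cap\pi_j\neq\emptyset$, i.e., $i$ receives one of $j$'s goods under $\pi'$. This is exactly what makes Hall's condition checkable: for any $X\subseteq S$, every good that agents of $X$ receive in $\pi'$ lies in $\pi_{N(X)}$, so one can sum the violation inequalities over $X$ against the EFX inequalities over a maximum matching inside $X\times N(X)$ and derive a numerical contradiction from any Hall violator --- this smallest-violator, nested-matching argument is the bulk of the paper's appendix proof. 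The inequality you want on edges then follows for free, since for $i\neq j$ the set $\pi_i'^+\cap\pi_j$ is disjoint from $\pi_i$, giving $\rho_i\ge\max_{o\in\pi_i'^+\cap\pi_j}u(o)\ge\min_{o\in\pi_j^+}u(o)\ge u(\pi_j)-u(\pi_i)$. Note also that the paper does not match all of $S$ by Hall: it matches only $S^+=\{i\in S: u(\pi_i)\ge 0\}$ and pairs each $i\in S^-$ with a leftover member of $T$ arbitrarily, using the chore term of the slack. That step exposes a second wrinkle in your version: since your sharper $\rho_i$ counts only chores in $\pi_i^-\setminus\pi'_i$, an agent $i\in S^-\cap T$ who gets all her chores back under $\pi'$ has no chore term, the EFX bound $u(\pi_j)-u(\pi_i)\le \underline c_i$ no longer compares to $\rho_i$, and the arbitrary pairing fails; handling this (e.g., by matching such an agent to herself) needs an argument you do not supply. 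Until the existence of the matching is actually proved --- with edges that track where items flow rather than edges defined by the inequality you want --- the proposal is a plan, not a proof.
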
	

\begin{proof}[Sketch of the proof]
	Consider an allocation $\pi$ satisfying EFX but not GEF1. There exist then $S \subseteq \agentSet$, $T \subseteq \agentSet$ and $\pi' \in \Pi(\pi_T, S)$ that are witnesses of a violation of GEF1. To get an increase up to one item, agents in $S$ should receive at least $s_i = \max \{ \max_{o \in \pi_i'^+} u(o), \max_{o \in \pi_i^-} -u(o) \}$ additional utility.
	Violating GEF1 then implies: $\forall i \in S, u(\pi'_i) - u(\pi_i) \geq s_i$ with one strict inequality. 
	By summing these over $i \in S$, 
	and since utilities are identical and additive, we have:
	\begin{align}
	\label{eq:proofEFX=>GEF1Line1}
	u(\pi_T) - u(\pi_S) > \sum_{i \in S} s_i.
	\end{align}
	
	As $\pi$ is EFX, the difference in utility between two agents cannot exceed $s_{i, j} = -\min\{\min_{o \in \pi_j^+} -u(o), \min_{o \in \pi_i^-} u(o)\}$.
	We have thus:
	\begin{align}
	\label{eq:proofEFX=>GEF1Line2}
	\forall i, j \in \agentSet, s_{i, j} \geq u(\pi_j) - u(\pi_i).
	\end{align}
	
	Our goal is now to sum up inequalities describe in \eqref{eq:proofEFX=>GEF1Line2} to obtain a contradiction with \eqref{eq:proofEFX=>GEF1Line1}. We are thus looking for pairs $(i, j)$ such that each $i \in S$ and each $j \in T$ appears once and only once, and such that the sum of $s_{i, j}$ over these pairs is smaller than $\sum_{i \in S} s_i$.
	
	\medskip
	
	Let us consider the bipartite graph $G = \tuple{S \cup T, E}$ where nodes represent agents in $S$ and in $T$. There is an edge $(i, j) \in E$ between agents $i \in S$ and $j \in T$ if and only if $\pi_i'^+ \cap \pi_j\neq \emptyset$, that is, $i$ receives some of $j$'s goods in $\pi'$. 
	
	We consider a partition of $S$ into $S^+ = \{ i \in S \mid u(\pi_i) \geq 0 \}$ and $S^- = S \backslash S^+$. Using Hall's theorem \citep{Hall35} we show that there always exists a matching $M \subseteq S \times T$ in $G$ that matches all the agents in $S^+$. The proof is omitted.

	We then extend the matching $M$ to match all the agents in $S$ by arbitrarily pairing each agent $i\in S^-$ with an unmatched agent in $T$. Let the extended matching be called $M^*$. Observe that for any new pair of agents $(i, k) \in M^* \backslash M$, we have $i \in S^-$, that is $\pi_i^- \neq \emptyset$, hence for any agent $j \in T$, we have:
	\begin{align}
	\label{eq:proofEFX=>GEF1Line6}
	s_{i, j} \leq \min_{o \in \pi_i^-} u(o) \leq \max_{o \in \pi_i^-} u(o) \leq s_i.
	\end{align}
	Moreover, for any pair of agents $(i, j) \in E$, we have $\pi_i'^+ \cap \pi_j\neq \emptyset$, hence:
	\begin{align}
	\label{eq:proofEFX=>GEF1Line5}
	s_i \geq \max_{o \in \pi_i'^+} u(o) \geq \max_{o \in \pi_i'^+ \cap \pi_j} u(o) \geq \min_{o \in \pi_j^+} u(o) \geq s_{i, j}.
	\end{align}
	From \eqref{eq:proofEFX=>GEF1Line5} and \eqref{eq:proofEFX=>GEF1Line6} we get that $\sum_{(i, j) \in M^*} s_{i, j} \leq \sum_{i \in S} s_i$. Hence, summing \eqref{eq:proofEFX=>GEF1Line2} over $(i, j) \in M^*$ brings $\sum_{i \in S} s_i \geq \sum_{(i, j) \in M^*} s_{i, j} \geq u(\pi_T) - u(\pi_S)$.  This contradicts \eqref{eq:proofEFX=>GEF1Line1}, we have thus proved that $\pi$ satisfies both EFX and GEF1.
\end{proof}
	
A direct consequence of the two previous lemmas is that a GEF1 allocation can be computed by the Egal-Sequential Algorithm.

\begin{theorem}
	\label{thm:existGEF1Identical}
	For identical utilities, an allocation satisfying GEF1 always exists and can be computed in linear time by the Egal-Sequential Algorithm.
\end{theorem}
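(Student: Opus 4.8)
\textbf{Proof plan for Theorem~\ref{thm:existGEF1Identical}.}

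The statement is an almost immediate consequence of the two preceding lemmas, so the plan is essentially to chain them together and then audit the running time claim. First I would invoke Lemma~\ref{lemma:egalSeqAlgoCorrect}, which guarantees that on any instance with identical utilities the Egal-Sequential Algorithm terminates and outputs an allocation $\pi$ satisfying EFX. Then I would apply Lemma~\ref{lemma:EFX=>GEF1Identical}, which states that under identical utilities every EFX allocation is also GEF1. Composing these two facts shows that the output $\pi$ of the algorithm is GEF1, which simultaneously establishes \emph{existence} of a GEF1 allocation (the algorithm always produces one) and \emph{constructibility} (it is produced by an explicit procedure). No new combinatorial argument is needed here; the content has already been discharged in the lemmas.

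The one point requiring genuine care is the running-time claim. Lemma~\ref{lemma:egalSeqAlgoCorrect} bounds the algorithm's running time by $\mathcal{O}(\max\{m\log m, mn\})$, whereas the theorem asserts computation \emph{in linear time}. I would therefore either reconcile the two bounds or flag the discrepancy. The $m\log m$ term comes from sorting the items by $|u(o)|$ on line~2, and the $mn$ term from scanning all $n$ agents to find an $\arg\min$ or $\arg\max$ at each of the $m$ iterations (lines~5 and~7). To justify ``linear,'' I would argue that the natural measure of input size for an instance with identical additive utilities is the number of utility values that must be read, namely $\Theta(m)$ values (one per item) together with the $n$ agents, so the input has size $\Theta(m+n)$; under this accounting the sort and the repeated scans are near-linear rather than strictly linear. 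The honest reading is that $\mathcal{O}(\max\{m\log m, mn\})$ is the precise bound and ``linear'' is a slight informality, so the cleanest write-up restates the exact bound from the lemma.

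The main obstacle is thus not mathematical but a matter of consistency: ensuring the complexity figure in the theorem matches the one proved in Lemma~\ref{lemma:egalSeqAlgoCorrect}. I would write the proof as a two-line deduction — EFX by Lemma~\ref{lemma:egalSeqAlgoCorrect}, hence GEF1 by Lemma~\ref{lemma:EFX=>GEF1Identical} — and carry over the running-time bound $\mathcal{O}(\max\{m\log m, mn\})$ verbatim, noting that this is near-linear in the size of the representation of an identical-utility instance.
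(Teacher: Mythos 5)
Your proposal matches the paper's own proof exactly: the paper simply chains Lemma~\ref{lemma:egalSeqAlgoCorrect} (the Egal-Sequential Algorithm outputs an EFX allocation) with Lemma~\ref{lemma:EFX=>GEF1Identical} (EFX implies GEF1 under identical utilities). Your observation that the theorem's ``linear time'' claim is inconsistent with the $\mathcal{O}(\max\{m\log m, mn\})$ bound proved in Lemma~\ref{lemma:egalSeqAlgoCorrect} is a fair and correct catch --- the paper itself never reconciles this, so restating the exact bound, as you suggest, is the cleaner formulation.
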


\citet{CFS+19} showed that when preferences are identical their relaxation of group-fairness is implied by EFX. We significantly extend their result in different ways. Firstly, our result applies in the case of mixed utilities where preferences can model both goods and chores. Secondly, we provide a linear time algorithm to compute GEF1 allocation with identical preferences. Finally, our proof does not involve the Nash social welfare which is not a suitable solution concept with chores.\footnote{When considering only chores, either maximizing or minimizing the absolute Nash social welfare does not imply EF1 for instance.}




\section{The Ternary Flow Algorithm}

In this section, we focus on another restriction of the preferences, namely \emph{ternary symmetric preferences}. We say that agent $a_i$ has ternary symmetric preferences if her preferences are additive and the utilities of the singletons are taken from the set $\{-\alpha_i, 0, \alpha_i\}$ for a given $\alpha_i > 0$. 

We provide an algorithm that computes GEF1 allocations for ternary symmetric preferences. We do so by proving that any leximin-optimal allocation is also GEF1 and by providing an algorithm returning a leximin-optimal allocation in polynomial time. 
Similar links between leximin-optimality and envy-freeness concepts have been observed by \citet{PlRo18} for the case of goods.

We first provide a characterization of Pareto-optimality for ternary symmetric utilities.

\begin{lemma}
	\label{prop:characPOTernaryUtil}
	Let $\instBoth = \tuple{\agentSet, \objSet, (u_i)_{i\in \agentSet}}$ be an instance where $(u_i)_{i \in \agentSet}$ describes a profile of ternary symmetric utility functions. An allocation $\pi \in \Pi(\agentSet, \objSet)$ is Pareto-optimal if and only if for every item $o \in \objSet$ we have:
	$$\left\{\begin{array}{l}
	o \in \pi_i \text{ with } u_i(o) > 0, \text{iff } \max_{j \in \agentSet} u_j(o) > 0, \\
	o \in \pi_i \text{ with } u_i(o) = 0, \text{iff } \max_{j \in \agentSet} u_j(o) = 0, \\
	o \in \pi_i \text{ with } u_i(o) < 0, \text{iff } \max_{j \in \agentSet} u_j(o) < 0.
	\end{array}\right.$$
\end{lemma}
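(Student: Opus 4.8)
The plan is to reduce Pareto comparisons to a purely combinatorial condition on net item counts, exploiting the symmetric structure. For each agent $i$, let $G_i = \{o : u_i(o) > 0\}$ and $C_i = \{o : u_i(o) < 0\}$ be the items $i$ regards as goods and chores. Because the nonzero values of $i$ all have the same magnitude $\alpha_i$, for any allocation $\pi$ we have $u_i(\pi_i) = \alpha_i\,(|\pi_i \cap G_i| - |\pi_i \cap C_i|)$; write $n_i(\pi) = |\pi_i \cap G_i| - |\pi_i \cap C_i|$ for this net count. Since $\alpha_i > 0$, the first step is to observe that $\pi'$ Pareto-dominates $\pi$ if and only if $n_i(\pi') \geq n_i(\pi)$ for every $i \in \agentSet$ with at least one strict inequality. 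This reformulation is what makes the characterization tractable, and I would argue both directions at the level of the integer vectors $(n_i)_{i \in \agentSet}$.

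For the forward direction I would prove the contrapositive: if the stated condition fails for some item $o$, then $\pi$ is not Pareto-optimal. Since $u_i(o) \leq \max_{j \in \agentSet} u_j(o)$ always holds, a failure means the holder $i$ of $o$ has $\operatorname{sign}(u_i(o))$ strictly below $\operatorname{sign}(\max_j u_j(o))$, leaving exactly three cases. In each case I would move $o$ from $i$ to a suitably chosen agent $k$ (one with $u_k(o) > 0$ when $\max_j u_j(o) > 0$, or $u_k(o) = 0$ when $\max_j u_j(o) = 0$) and check that this single transfer raises $n_i$ or $n_k$ while leaving every other agent's net count unchanged---hence a Pareto improvement by the reformulation. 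These are routine exchange arguments.

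The substantive direction is sufficiency, and the key idea is a separable maximization. Summing over agents, $\sum_{i \in \agentSet} n_i(\pi) = \sum_{o \in \objSet} \gamma_o(\pi)$, where $\gamma_o(\pi) \in \{-1,0,+1\}$ records the sign with which the holder of $o$ values it; crucially each item's term depends only on who holds that item, so $\sum_i n_i$ can be maximized item by item. For each $o$ the largest achievable value of $\gamma_o$ is exactly $\operatorname{sign}(\max_{j \in \agentSet} u_j(o))$, and I would verify that the three conditions say precisely that $\pi$ attains this per-item maximum everywhere. Hence $\pi$ maximizes $\sum_i n_i$ over all allocations. If some $\pi'$ Pareto-dominated $\pi$, then $n_i(\pi') \geq n_i(\pi)$ for all $i$ would force $\sum_i n_i(\pi') \geq \sum_i n_i(\pi)$, which combined with maximality yields equality of the totals and therefore $n_i(\pi') = n_i(\pi)$ for all $i$, contradicting the required strict inequality. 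The main obstacle is recognizing that the conditions are exactly the statement that $\pi$ maximizes the separable total $\sum_i n_i$; once that observation is in place, the contradiction is immediate.
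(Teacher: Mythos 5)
Your proof is correct and takes essentially the same route as the paper's: your net-count reformulation is exactly the paper's rescaling to normalized utilities ($\alpha_i = 1$), your three exchange cases for necessity are the paper's transfer argument, and your separable per-item maximization of $\sum_i n_i$ simply makes explicit the step the paper leaves implicit when it asserts that allocating every item to a maximal-utility agent implies Pareto-optimality. No gaps.
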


\begin{proof}
	Since Pareto-optimality is invariant under rescaling of utilities, assume w.l.o.g. that $\alpha_i = 1$, $\forall i \in \agentSet$. Then, if the three conditions hold, every item is allocated to an agent having maximal utility for it. This implies Pareto-optimality.
	
	Next, assume that one item $o$ is allocated to an agent $i$ who do not have maximal utility for it: $o$ is a chore (resp. neutral) for $i$ and there is $j$ who considers $o$ as either neutral or a good (resp. a good). Then transferring $o$ from $i$ to $j$ leads to a Pareto-improvement.
\end{proof}

For a profile of ternary symmetric preferences $(u_i)_{i \in \agentSet}$, we introduce the normalized profile$(u_i^{\text{Norm}})_{i \in \agentSet}$ that corresponds to $(u_i)_{i \in \agentSet}$ but such that every $\alpha_i$ as been set to 1. For a normalized profile, every singleton has utility in $\{-1, 0, 1\}$.

This preference domain models statements such as ``I like'', ``I am indifferent'' and ``I do not like''. It is close to the idea of approval and disapproval voting \citep{BrFi78, Fels89}. In the fair division literature, it has also been referred to as dichotomous preferences when there are only goods \citep{BMS05}.

\medskip

Next we introduce leximin optimality. For an allocation $\pi$ we denote by $\vec{u}(\pi) \in \mathbb{R}^n$ the vector of the utilities in $\pi$ sorted increasingly.
For two vectors $\vec{u}, \vec{v} \in \mathbb{R}^k$, we say that $\vec{u}$ leximin-dominates $\vec{v}$, written $\vec{u} \succ_{lex} \vec{v}$, if there exits an index $i \leq k$ such that $\vec{u}_j = \vec{v}_j, \forall j < i$, and $\vec{u}_i > \vec{v}_i$. Finally, an allocation $\pi$ is said to be leximin-optimal if there is no allocation $\pi'$ such that $\vec{u}(\pi) \succ_{lex} \vec{u}(\pi')$. 

\medskip

%
%
Our algorithm makes used of the Nash Flow Algorithm presented by \citet{DaSc15}. It computes in polynomial time an allocation maximizing the Nash social welfare when preferences are binary, i.e., the singletons' utilities are in $\{0, 1\}$, by using a cost flow network \citep[chapters 10 and 12]{Schr03}.\footnote{\citet{BKV18} also proposed a polynomial-time algorithm for maximizing the Nash social welfare with binary utilities. However, it does not imply leximin-optimality which we need to achieve GEF1.} It constructs a specific network for which any minimum integer cost flow corresponds to an allocation maximizing the Nash social welfare.

\medskip

We first extend \cite{DaSc15}'s result by showing that the allocation returned by the Nash Flow Algorithm is also leximin-optimal.

\begin{lemma}
	\label{lemma:TernNashFlow=>Lex}
	Let $I = \tuple{\agentSet, \objSet, (u_i)_{i\in\agentSet}} \in \instGood$ be an instance with only goods and where $(u_i)_{i \in \agentSet}$ describes a profile of normalized ternary utility functions. An allocation $\pi$ is leximin-optimal if and only if it correspond to a minimum cost integer flow in the network defined by the Nash Flow Algorithm.
\end{lemma}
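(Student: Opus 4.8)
The plan is to reduce the statement to a comparison between Nash-welfare maximisation and leximin-optimality, and then to drive both characterisations with a single exchange argument on the flow network. Since the instance lies in $\instGood$ and the profile is normalized ternary, every singleton utility lies in $\{0,1\}$, so each $u_i(\pi_i)$ is a non-negative integer equal to the number of goods that $i$ values among those assigned to $i$. First I would recall the structure of \citet{DaSc15}'s network: a source, one node per good, one node per agent, and a sink, where the $k$-th unit of utility routed to agent $i$ carries the convex marginal cost $c_k = -(\log k - \log(k-1))$. Integer flows (of the fixed value equal to the number of assignable goods) are in bijection with the Pareto-optimal allocations of the valued goods characterized in Lemma~\ref{prop:characPOTernaryUtil}, the flow into agent $i$ equals $u_i(\pi_i)$, and the total cost equals $-\sum_{i} \log u_i(\pi_i)$. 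Thus a flow has minimum cost if and only if it maximizes the Nash social welfare (with the standard convention that the number of agents with positive utility is maximized first, which is consistent with leximin's maximisation of the minimum), and it remains to show that for these binary utilities an allocation maximizes the Nash social welfare if and only if it is leximin-optimal.

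The engine for both directions is the residual-cycle characterization of min-cost flows. A residual cycle avoiding the source and sink toggles a set of good–agent assignments whose net effect is to remove one unit of utility from some agent $a$ and add one unit to some agent $b$, leaving all other agents unchanged; I would first verify that every residual circulation decomposes into such \emph{unit transfers}. The cost of one such cycle is $d(a) - d(b+1)$, where $d(k) = \log k - \log(k-1)$ is strictly decreasing, so the cycle has negative cost precisely when $u_a(\pi_a) \geq u_b(\pi_b) + 2$. The crucial observation is that such a move is exactly a Pigou–Dalton (Robin Hood) transfer: shifting one unit from an agent with utility $a$ to one with utility $b \leq a-2$ strictly increases $\prod_i u_i(\pi_i)$ (since $(a-1)(b+1) = ab + (a-b-1) > ab$) and strictly improves the sorted profile $\vec u(\pi)$ in the leximin order. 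Conversely, any available transfer with $u_a(\pi_a) \leq u_b(\pi_b)+1$ has non-negative cost and never improves leximin.

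With this engine the direction leximin-optimal $\Rightarrow$ min-cost is immediate by contraposition: if the flow is not of minimum cost there is a negative residual cycle, i.e. an available unit transfer with gap at least two, and applying it yields an allocation that leximin-dominates $\pi$, contradicting leximin-optimality. The reverse direction, min-cost $\Rightarrow$ leximin-optimal, is the main obstacle. Here I would assume for contradiction a min-cost flow $f_\pi$ together with an allocation $\pi''$ whose profile satisfies $\vec u(\pi'') \succ_{lex} \vec u(\pi)$, and analyse the residual circulation $f_{\pi''} \ominus f_\pi$ in the residual graph of $f_\pi$. Decomposing it into unit transfers that are \emph{simultaneously} available from $\pi$, optimality of $f_\pi$ forces each such transfer to have gap at most one, since a gap of two or more would be a negative residual cycle. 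The delicate point is then to show that no collection of gap-$\leq\!1$ transfers out of $\pi$ can produce a leximin-dominating profile: transfers of gap exactly one preserve the multiset of utilities and smaller gaps are anti-balancing, so in the majorization order the profile of $\pi''$ must be majorized by that of $\pi$, which is incompatible with $\vec u(\pi'') \succ_{lex} \vec u(\pi)$. Equivalently, one proves that the min-cost (Nash) profile is the unique majorization-minimal achievable profile, hence leximin-optimal. I expect the careful bookkeeping of shared transfer endpoints in this majorization step — making precise that being locally balanced (no single Robin Hood move available) implies global leximin-optimality on the achievable set — to be where the real work lies, with the flow decomposition serving precisely to localise any global leximin improvement into a single negative residual cycle.
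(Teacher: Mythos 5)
Your forward direction (leximin-optimal $\Rightarrow$ minimum cost) is essentially the paper's: both argue by contraposition that a violation of minimum cost yields a residual transfer of one unit from an agent to another agent poorer by at least two, which leximin-improves. Two small inaccuracies there: the costs in the network of \citet{DaSc15} are $n^j$ for the $j$-th unit of an agent's utility, not logarithmic differences (logarithmic marginals are ill-defined at utility zero), and the utility-changing residual cycles necessarily pass \emph{through} the sink, since that is where unit counts change; neither affects your conclusion that a residual cycle is negative exactly when it moves a unit across a gap of at least two. The genuine divergence is in the reverse direction (minimum cost $\Rightarrow$ leximin-optimal). The paper never argues it directly: it cites \citet[proof of Theorem 3.3, step 3]{DaSc15}, who show that \emph{all} minimum cost integer flows induce the same sorted utility vector, so the forward direction plus the existence of a leximin-optimal allocation (which is Pareto-optimal, hence corresponds to a flow) immediately makes every minimum cost flow leximin-optimal. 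You instead attempt a self-contained exchange argument, which is a legitimately different route.

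However, the step you give for that route would fail as stated. First, ``transfers of gap exactly one preserve the multiset of utilities'' is false once transfers in the decomposition share endpoints: from the profile $(5,4,4)$, two gap-one transfers out of the first agent produce $(3,5,5)$, a different multiset, and both corresponding cycles have non-negative cost ($n^5-n^5$ and $n^5-n^4$), so minimum cost does not exclude this configuration; relatedly, once an agent loses or gains several units, the relevant level of each cycle is no longer $u_a(\pi_a)$ or $u_b(\pi_b)$ but the specific ``t''-edge it uses. Second, your majorization is stated backwards: what one must prove is that $\vec u(\pi'')$ \emph{majorizes} $\vec u(\pi)$ (is more spread out); ``$\pi''$ majorized by $\pi$'' is exactly what a leximin improvement looks like --- $(1,1)$ is majorized by $(0,2)$ and leximin-dominates it --- so it is not incompatible with $\vec u(\pi'') \succ_{lex} \vec u(\pi)$. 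The correct repair is unit-level bookkeeping: minimum cost forces every cycle in the decomposition of $f_{\pi''}-f_\pi$ to pair a unit lost at some level $j$ with a unit gained at some level $k \geq j$, and the existence of such a level-respecting matching gives $\sum_i \min(u_i(\pi''_i),w) \leq \sum_i \min(u_i(\pi_i),w)$ for every $w$, which is what contradicts leximin domination. You explicitly defer exactly this (``where the real work lies''), and the heuristic you offer in its place is incorrect; so as it stands the hard direction is not proved, whereas the paper's citation of the uniqueness result lets it bypass this work entirely.
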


This equivalence implies that any leximin-optimal allocation also maximizes the Nash social welfare.

\begin{corollary}
	Let $I = \tuple{\agentSet, \objSet, (u_i)_{i\in\agentSet}} \in \instGood$ be an instance with only goods and where $(u_i)_{i \in \agentSet}$ describes a profile of normalized ternary utility functions. Any leximin-optimal allocation $\pi$ maximizes the Nash social welfare.
\end{corollary}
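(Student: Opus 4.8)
The plan is to derive the corollary directly from Lemma~\ref{lemma:TernNashFlow=>Lex}, which establishes that an allocation $\pi$ is leximin-optimal if and only if it corresponds to a minimum cost integer flow in the network built by the Nash Flow Algorithm. Since \citet{DaSc15} proved that any minimum cost integer flow in this network corresponds to an allocation maximizing the Nash social welfare, the two properties are mediated by the same object. I would first take an arbitrary leximin-optimal allocation $\pi$. By the ``only if'' direction of Lemma~\ref{lemma:TernNashFlow=>Lex}, $\pi$ corresponds to a minimum cost integer flow in the Nash Flow network. By the defining property of that network (the result of \citet{DaSc15} that the algorithm invokes), any such minimum cost integer flow yields an allocation of maximum Nash social welfare. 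Hence $\pi$ maximizes the Nash social welfare, which is exactly the claim.

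The only subtlety worth flagging is that Lemma~\ref{lemma:TernNashFlow=>Lex} is stated for \emph{normalized ternary} utility functions with only goods, so each singleton utility lies in $\{0,1\}$ after normalization; this is precisely the binary regime in which the Nash Flow Algorithm of \citet{DaSc15} is guaranteed to produce Nash-welfare-maximizing allocations. I would make sure the hypotheses of the corollary match those of the lemma (they do: same instance class $\instGood$, same normalized ternary profile), so no additional reduction or rescaling is needed. One should also note that Nash social welfare maximization is invariant under the normalization $\alpha_i \mapsto 1$ in the same way Pareto-optimality was in Lemma~\ref{prop:characPOTernaryUtil}, so the conclusion transfers back to the original (un-normalized) ternary profile if desired, though the corollary as stated already assumes the normalized profile.

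There is essentially no main obstacle here: the corollary is a one-line logical consequence of chaining the ``only if'' half of the established equivalence with the cited correctness guarantee of the Nash Flow Algorithm. The only thing that requires care is citing the correct direction of the biconditional in Lemma~\ref{lemma:TernNashFlow=>Lex} and explicitly invoking the property of the network from \citet{DaSc15} that every minimum cost integer flow encodes a Nash-optimal allocation, rather than re-proving that property. The converse inclusion (every Nash-optimal allocation is leximin-optimal) is \emph{not} needed for the corollary and in fact need not hold, so I would avoid asserting it.
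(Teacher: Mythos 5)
Your proposal is correct and takes essentially the same route as the paper: the paper derives the corollary in one line by chaining the ``only if'' direction of Lemma~\ref{lemma:TernNashFlow=>Lex} (leximin-optimal $\Rightarrow$ minimum cost integer flow) with the property of the Nash Flow network from \citet{DaSc15} that every minimum cost integer flow corresponds to a Nash-welfare-maximizing allocation. Your added remarks on matching hypotheses and on not needing the converse are sound and do not change the argument.
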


Making use of the Nash Flow Algorithm, we propose the Ternary Flow Algorithm (Algorithm \ref{algo:lexminTern}). It computes a leximin-optimal allocation on the normalized utilities which corresponds to a GEF1 allocation w.r.t. the original preferences.

\begin{algorithm}[t]
	\DontPrintSemicolon
	\KwIn{An instance $I = \tuple{\agentSet, \objSet, (u_i)_{i \in \agentSet})}$ such that $\forall i \in \agentSet, u_i = u$, for a given utility function $u$}
	\KwOut{$\pi \in \Pi(\objSet, \agentSet)$ a GEF1 allocation}
	Set $O^+$ to $\{o\in \objSet: \max_{i\in \agentSet} u_i(o)>0\}$\;
	Set $O^0$ to $\{o\in \objSet: \max_{i\in \agentSet} u_i(o)=0\}$\;
	Set $O^-$ to $\{o\in \objSet: \max_{i\in \agentSet} u_i(o)<0\}$\;
	Consider new utility functions $(u_i')_{i \in \agentSet}$ such that $\forall i \in \agentSet, \forall o \in O^+, u_i'(o) = \left\{\begin{array}{cl}
	1 & \text{if } u_i^{\text{Norm}}(o) = 1, \\ 
	0 & \text{otherwise }
	\end{array}\right.$ \label{algoLine:lexminTernUtilities} \;
	Run the Nash Flow Algorithm on $I' = \tuple{\agentSet, O^+, (u_i')_{i \in \agentSet}}$ to obtain the partial allocation $\pi$ \label{algoLine:lexminTernNashFlow}\;
	\For{$o \in O^-$ \label{algoLine:lexminTernFor1}}{
		Allocate $o$ to $i^* \in \arg\max_{i \in \agentSet} u_i^{\text{Norm}}(\pi_i)$ and update $\pi$\;}
	\For{$o \in O^0$}{
		Allocate $o$ to $i^* \in \arg\min_{i \in \agentSet, u_i^{\text{Norm}}(o) = 0} u_i^{\text{Norm}}(\pi_i)$ and update $\pi$ \;}
	\Return $\pi$ \;
	\caption{The Ternary Flow Algorithm}
	\label{algo:lexminTern}
\end{algorithm}

\begin{lemma}
	\label{lemma:algoLeminTernCorrect}
	For ternary symmetric preferences, the Ternary Flow Algorithm returns allocations that are leximin-optimal for the normalized preferences.
\end{lemma}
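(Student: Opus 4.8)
The plan is to prove the stronger statement that the utility vector of the returned allocation, evaluated with the normalized utilities $u_i^{\mathrm{Norm}}$, is leximin-optimal, using two structural facts: leximin-optimal allocations are Pareto-optimal, and Lemma~\ref{prop:characPOTernaryUtil} pins down exactly what Pareto-optimal allocations look like here. First I would argue that the returned allocation $\pi$ is Pareto-optimal and identify its utilities. Each item of $O^+$ is handled by the Nash Flow step on the binary instance $I'$; since every such item has some agent with $u'_i=1$, a leximin-optimal (hence waste-free) flow, which Lemma~\ref{lemma:TernNashFlow=>Lex} guarantees, allocates it to an agent valuing it, so on $O^+$ the utilities realized under $u'_i$ and under $u_i^{\mathrm{Norm}}$ coincide. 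Each item of $O^-$ is a chore for everyone, and each item of $O^0$ is placed on a $0$-valuer and contributes $0$ to every utility. By Lemma~\ref{prop:characPOTernaryUtil} this makes $\pi$ Pareto-optimal and shows that $O^0$ is irrelevant to the utility vector, so it remains to reason about the ``goods minus chores'' vector.

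Next I would describe the achievable utility vectors. Since leximin-optimal allocations are Pareto-optimal, I may restrict attention to Pareto-optimal allocations, whose utility profiles are exactly $\tuple{g_i - c_i}_{i \in \agentSet}$, where $g$ ranges over the goods-profiles achievable on $O^+$ (collect them in $\mathcal{G}$) and $c$ over the nonnegative integer vectors with $\sum_{i} c_i = |O^-|$, the two being choosable independently. Crucially, every $g \in \mathcal{G}$ satisfies $\sum_i g_i = |O^+|$, because all positively-valued items are allocated, each to a single valuer; hence all achievable utility vectors share the same total $|O^+| - |O^-|$.

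I would then isolate two facts. By Lemma~\ref{lemma:TernNashFlow=>Lex}, the goods-profile $g^\star$ produced by the Nash Flow step is leximin-optimal within $\mathcal{G}$. And for any fixed goods-profile $g$, greedily handing each chore to a currently best-off agent yields the profile obtained by repeatedly decrementing a current maximum; a short Robin-Hood exchange argument shows this is the leximin-optimal way to subtract $|O^-|$ unit chores from $g$. Writing $T(\cdot)$ for this top-levelling operator, the algorithm outputs the vector $T(g^\star)$, while, combining the two facts, the global leximin optimum equals $\max^{\mathrm{lex}}_{g \in \mathcal{G}} T(g)$.

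The heart of the proof, and the step I expect to be the main obstacle, is to show that top-levelling preserves leximin-optimality within $\mathcal{G}$: that $g^\star \succeq_{lex} g$ together with $\sum_i g^\star_i = \sum_i g_i$ forces $T(g^\star) \succeq_{lex} T(g)$. This equal-total hypothesis is essential, since for profiles of different totals top-levelling is not leximin-monotone (shaving one unit off $\tuple{0,3}$ beats shaving it off the leximin-better $\tuple{1,1}$). I would prove the equal-total version by analysing the threshold structure of $T$: top-levelling leaves every coordinate below the final water level untouched, so the low-order entries of $T(g)$ agree with those of $g$; the leximin advantage of $g^\star$ therefore survives on exactly the coordinates that decide the leximin comparison, while the shaved tops of both profiles are flat to within one unit. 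Combining this monotonicity with the two facts above gives $T(g^\star) = \max^{\mathrm{lex}}_{g \in \mathcal{G}} T(g)$, so the returned vector is leximin-optimal, as desired.
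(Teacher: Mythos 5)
Your proposal is correct in outline, but it takes a genuinely different route from the paper's proof, so it is worth comparing the two. Both arguments rest on the same two ingredients: Lemma~\ref{lemma:TernNashFlow=>Lex} to get leximin-optimality of the goods phase, and the Pareto-optimality characterization (Lemma~\ref{prop:characPOTernaryUtil}), which forces every relevant allocation to have the same total normalized utility $|O^+|-|O^-|$. The difference is in how the chore phase is handled. The paper proceeds incrementally: it shows by induction that the partial allocation stays leximin-optimal after each single chore is handed to a currently best-off agent; the inductive step is short because any dominating allocation $\pi'$ either improves at an index below the top (then deleting the new chore from $\pi'$ contradicts the previous optimum) or improves at the top index (then $\pi'$ has strictly larger utility sum, contradicting the equal-sum property of Pareto-optimal allocations). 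You instead argue globally: you decompose the Pareto-optimal utility profiles as $g-c$ with $g\in\mathcal{G}$ and $c$ an arbitrary distribution of $|O^-|$ unit chores, observe that the algorithm outputs $T(g^\star)$, and reduce everything to a standalone combinatorial lemma: the top-levelling operator $T$ is leximin-monotone on equal-sum profiles. That lemma is true, and your identification of the equal-sum hypothesis as essential (with the $\tuple{0,3}$ versus $\tuple{1,1}$ example) is exactly right, but be aware that your sketch of it is looser than it reads: the deciding low-order coordinates are \emph{not} always left untouched, since the leximin-better profile can have its advantage sit exactly at the water level and get shaved away --- e.g.\ one unit off $(2,2,2)$ versus one unit off $(1,2,3)$ both yield $(1,2,2)$, so the advantage can collapse to equality --- and ruling out an actual reversal above the deciding index again needs the equal-sum argument. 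A clean way to close this is to prove the single-decrement case by case analysis (using equal sums) and induct on the number of chores. What your route buys is a reusable, algorithm-independent statement about top-levelling; what the paper's route buys is brevity, since each inductive step only ever compares allocations differing in one item.
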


\begin{proof}[Sketch of the proof]
	From Lemma \ref{lemma:TernNashFlow=>Lex} we know that $\pi$, returned by the Nash Flow Algorithm, is leximin-optimal. We claim that allocating $o \in O^-$, which is a chore for everyone, to the best off agent maintains leximin-optimality. Finally, giving items in $O^0$ to agents who value them 0 also preserves leximin-optimality.
\end{proof}

\begin{lemma}
	\label{lemma:lexmin=>GEF1Tern}
	If the preferences are normalised ternary symmetric, any leximin-optimal allocation also satisfies GEF1.
\end{lemma}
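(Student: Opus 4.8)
The plan is to argue by contradiction: assume $\pi$ is leximin-optimal for the normalised ternary symmetric profile but violates GEF1, fix witnesses $S,T$ with $|S|=|T|$ and a reallocation $\pi'\in\Pi(\pi_T,S)$, and (after discarding agents common to $S$ and $T$) assume $S\cap T=\emptyset$. The first step is to turn the violation into clean per-agent inequalities using integrality of the utilities. For $i\in S$ put $d_i=u_i(\pi_i')-u_i(\pi_i)$, and let $r_i\in\{0,1\}$ be the largest decrease of $d_i$ obtainable by deleting a single chore of $\pi_i$ or a single good of $\pi_i'$ (so $r_i=1$ exactly when $\pi_i'$ contains an item $i$ values $1$ or $\pi_i$ contains an item $i$ values $-1$). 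Since the removals are chosen independently across agents, a GEF1 violation is equivalent to $d_i\ge r_i$ for all $i\in S$ together with $d_{i_0}>r_{i_0}$ for some $i_0$. A short case check then shows $r_i=0$ forces $u_i(\pi_i')=u_i(\pi_i)=0$, so only $S_1=\{i\in S: r_i=1\}$ contributes, and the distinguished agent satisfies $u_{i_0}(\pi_{i_0}')\ge u_{i_0}(\pi_{i_0})+2$.

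Next I would set up the leximin engine. Leximin-optimality implies Pareto-optimality, so by Lemma~\ref{prop:characPOTernaryUtil} every item valued $1$ by someone is held by an agent valuing it $1$, and every universally disliked item is valued $-1$ by everybody. Hence a value-$1$ good wanted by $i\in S$ currently sits with a $T$-agent who also values it $1$, and a $-1$-chore of $i$ is disliked by all. For either kind of item, moving it (a good from its holder $j$ to $i$, or a chore from $i$ to any $j$) is a Pigou–Dalton (progressive) transfer; such a one-unit transfer strictly raises the leximin value whenever the utility gap is at least $2$, so leximin-optimality forbids it and, by integrality, yields the pairwise bound $u_j(\pi_j)-u_i(\pi_i)\le 1$ whenever such a transfer is available. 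This is the exact analogue of the EFX bound used in Lemma~\ref{lemma:EFX=>GEF1Identical}.

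The third step is aggregation. Summing $d_i\ge r_i$ over $S$ with one strict inequality gives $\Delta:=\sum_{i\in S}\bigl(u_i(\pi_i')-u_i(\pi_i)\bigr)>|S_1|$. On the other hand, each item of $\pi_T$ is valued by its new owner in $S$ at most as much as by its Pareto-optimal holder in $T$, so $\sum_{i\in S}u_i(\pi_i')\le\sum_{j\in T}u_j(\pi_j)$, whence $\Delta\le\sum_{j\in T}u_j(\pi_j)-\sum_{i\in S}u_i(\pi_i)$. It then remains to bound the right-hand side by $|S_1|$, and here Hall's theorem enters as in Lemma~\ref{lemma:EFX=>GEF1Identical}: I would build a perfect matching between $S$ and $T$ in which each $i\in S_1$ is paired with a $T$-agent for which the gap bound applies (good-receivers matched through the bipartite ``wanted-good'' graph, chore-holders matched arbitrarily since their bound holds against any $j$), while the agents of $S\setminus S_1$ absorb the remaining, non-positive $T$-agents. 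Summing the pairwise bounds along this matching gives $\sum_{j\in T}u_j(\pi_j)-\sum_{i\in S}u_i(\pi_i)\le|S_1|$, contradicting $\Delta>|S_1|$.

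The hard part is precisely this last aggregation. In the identical-utility setting $\sum_i u(\pi_i')$ collapses to $u(\pi_T)$ and the matching sum telescopes, but heterogeneity makes the inequality $\sum_{i\in S}u_i(\pi_i')\le\sum_{j\in T}u_j(\pi_j)$ lossy, so the matching must be engineered so that every richer $T$-agent is paired with an $S_1$-agent admitting a justified transfer (hence the gap bound), while the slack introduced by $S\setminus S_1$ never carries positive utility. Verifying the requisite Hall condition, together with the reduction to $S\cap T=\emptyset$ and the symmetric treatment of the chore side, is the delicate core of the argument.
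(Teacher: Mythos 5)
Your plan is not the paper's plan, so let me first record the difference. The paper does not use Hall's theorem here: it sorts the utilities of $S$ and of $T$ increasingly, considers the first index $j^*$ with $t_{j^*} > s_{j^*}+1$, and uses two claims derived from leximin-optimality (exactly your Pigou--Dalton transfer bounds: if $u_j(\pi_j)-u_i(\pi_i)\ge 2$ then $j$ holds no good that $i$ values at $1$ and $i$ holds no chore; and if anyone has negative utility all utilities lie within a band of width $1$) to argue by counting that the $T$-agents below $j^*$ cannot supply enough goods. Your steps~1 and~2 are sound and faithful: the per-agent reduction ($d_i\ge r_i$ with one strict; $r_i=0$ forces $u_i(\pi_i)=u_i(\pi_i')=0$; the strict agent gains at least $2$), the use of Lemma~\ref{prop:characPOTernaryUtil}, and the pairwise gap bounds all match the paper's claims.

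The gaps are in the remaining steps. First, the reduction to $S\cap T=\emptyset$ is unjustified and in general invalid: a violation with overlapping groups (e.g.\ $S=T=\agentSet$, the Pareto-type violations) cannot be converted into a disjoint one, since deleting a common agent removes her items from the pool $\pi_T$ while other members of $S$ may hold exactly those items under $\pi'$. (It is also unnecessary: the bipartite graph makes sense with overlap, as in the paper's own Lemma~\ref{lemma:EFX=>GEF1Identical}.) Second, and decisively, the step you defer as ``the delicate core'' --- the Hall condition --- is not merely delicate; it can fail. Take four agents and two goods with $u_1\equiv 0$; $u_2(o_1)=u_2(o_2)=1$; $u_3(o_1)=1,\ u_3(o_2)=0$; $u_4(o_1)=0,\ u_4(o_2)=1$; and $\pi=(\emptyset,\emptyset,\{o_1\},\{o_2\})$. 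This $\pi$ is leximin-optimal (the sorted vector $(0,0,1,1)$ is the leximin maximum, since at most two agents can ever be positive). Now take $S=\{1,2\}$, $T=\{3,4\}$, $\pi'=(\emptyset,\{o_1,o_2\})$: under your step-1 characterization (which is faithful to the paper's formal definition) this is a GEF1 violation, with $S_1=\{2\}$ and $S\setminus S_1=\{1\}$. Both $T$-agents have utility $1>0$, so agent $1$ has no admissible ``non-positive'' partner, no matching of the required kind exists, and indeed $\sum_{j\in T}u_j(\pi_j)-\sum_{i\in S}u_i(\pi_i)=2>1=|S_1|$: the aggregation inequality you intend to derive from leximin-optimality is simply not a consequence of it. The paper's argument sidesteps this configuration only by asserting that a violation must give \emph{every} agent of $S$ utility at least $s_i+1$ --- a premise that excludes the $r_i=0$ ``filler'' agents but that does not follow from the stated definition (the same example shows this, so the edge case is a weakness of the paper's proof as well); your more literal treatment of the definition meets the problem head-on at precisely the step you left unproved, and as stated that step cannot be completed.
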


\begin{proof}[Sketch of the proof]
	Let $\pi$ be a leximin-optimal allocation. The proof is based on the two following claims derived from leximin-optimality. Proofs are omitted.
		
	\begin{claim}
		\label{claim:lexmin=>GEF1Claim2}
		If $min_{i \in \agentSet}u_i(\pi_i) < 0$, then $\forall j \in \agentSet, 0 \leq u_j(\pi_j) - min_{i \in \agentSet}u_i(\pi_i) \leq 1$. 
	\end{claim}
	
	\begin{claim}
		\label{claim:lexmin=>GEF1Claim1} For every $i, j \in \agentSet$, if $u_j(\pi_j) - u_i(\pi_i) \geq 2$ then $\{o \in \pi_j \mid u_i(o) = 1\} \cup \{o \in \pi_i \mid u_j(o) = -1\} = \emptyset$.
	\end{claim}
	
	Assume toward a contradiction that there exist $S$, $T$ and $\pi'$ which are witnesses of a violation of GEF1. For any agent in $S$, her utility in $\pi'$ should be at least one higher than in $\pi$, and there exists and agent in $S$ who increases her utility by at least two. The first claim implies that this is not possible if some agent receives negative utility. The second claim states that each agent in $T$ can increase by at most one the utility of an agent in $S$. The violation of GEF1 is therefore not possible and we have proved that $\pi$ is both leximin-optimal and GEF1 for the normalized preferences.
\end{proof}

From Lemma \ref{lemma:algoLeminTernCorrect} and \ref{lemma:lexmin=>GEF1Tern}, we derive the statement for GEF1 allocations.

\begin{theorem}
	\label{thm:existsGEF1Tern}
	Let $I = \tuple{\agentSet, \objSet, (u_i)_{i\in\agentSet}} \in \instBoth$ be an instance where $(u_i)_{i \in \agentSet}$ describes a profile of ternary symmetric utility functions. A GEF1 allocation always exists and can be computed in polynomial time via the Ternary Flow Algorithm.
\end{theorem}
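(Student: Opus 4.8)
The plan is to combine the two preceding lemmas and then bridge the small gap between the normalized and the original utilities. First I would invoke Lemma~\ref{lemma:algoLeminTernCorrect} to conclude that the allocation $\pi$ returned by the Ternary Flow Algorithm is leximin-optimal with respect to the normalized profile $(u_i^{\text{Norm}})_{i \in \agentSet}$. Since a normalized ternary symmetric profile is itself a normalized ternary symmetric profile, Lemma~\ref{lemma:lexmin=>GEF1Tern} then applies directly and yields that $\pi$ is GEF1 with respect to $(u_i^{\text{Norm}})_{i \in \agentSet}$. What remains is to transfer this guarantee back to the original profile $(u_i)_{i \in \agentSet}$, where $u_i = \alpha_i\, u_i^{\text{Norm}}$ for some $\alpha_i > 0$.

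For this transfer I would argue that GEF1 is invariant under independent positive rescaling of each agent's utility. Fix any witnesses $S, T$ with $|S| = |T| \neq 0$ and any $\pi' \in \Pi(\pi_T, S)$. The candidate removal sets $\pi_i^- \cup \pi_i'^+$ appearing in the definition of GEF1 depend only on the \emph{signs} of the singleton utilities, which are unchanged when multiplying $u_i$ by $\alpha_i > 0$; hence the family of admissible choices $(O_i)_{i \in S}$ is identical under both profiles. Moreover, for each fixed choice the Pareto-domination test compares, coordinate by coordinate, $\frac{|S|}{|T|} u_i(\pi_i' \setminus O_i)$ against $u_i(\pi_i \setminus O_i)$, and multiplying both of these by the same factor $\alpha_i > 0$ preserves every weak inequality together with its strictness. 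Consequently $\tuple{\frac{|S|}{|T|} u_i(\pi_i' \setminus O_i)}_{i \in S}$ Pareto-dominates $\tuple{u_i(\pi_i \setminus O_i)}_{i \in S}$ under the original profile if and only if it does so under the normalized one, so any GEF1 violation for $(u_i)_{i \in \agentSet}$ would be a GEF1 violation for $(u_i^{\text{Norm}})_{i \in \agentSet}$, contradicting the previous step. Thus $\pi$ is GEF1 for the original instance.

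Finally, for the running time I would check that each step is polynomial: partitioning $\objSet$ into $O^+, O^0, O^-$ and building the binary instance $I'$ on line~\ref{algoLine:lexminTernUtilities} takes time linear in $nm$; the call to the Nash Flow Algorithm on line~\ref{algoLine:lexminTernNashFlow} runs in polynomial time by \citet{DaSc15}; and the two greedy loops over $O^-$ and $O^0$ perform at most $m$ iterations, each selecting an extremal agent in $\mathcal{O}(n)$ time. Summing gives an overall polynomial bound. The heavy lifting is already done by the two lemmas, so the only genuinely new ingredient is the scale-invariance argument; the point requiring care there is that the ``up to one item'' slack and the magnitude comparison rescale by the \emph{same} per-agent factor $\alpha_i$, so that normalizing the utilities neither creates nor destroys a GEF1 violation.
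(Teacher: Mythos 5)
Your proposal is correct and takes essentially the same approach as the paper: chain Lemma~\ref{lemma:algoLeminTernCorrect} with Lemma~\ref{lemma:lexmin=>GEF1Tern}, then transfer the GEF1 guarantee from the normalized profile back to the original one. Your per-agent scale-invariance argument is just a careful spelling-out of the paper's brief remark that GEF1 requires no interpersonal comparisons, and the explicit polynomial running-time check is a reasonable addition the paper leaves implicit.
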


\begin{proof}
	Let $\pi$ be the allocation returned by the Ternary Flow Algorithm. By Lemma \ref{lemma:algoLeminTernCorrect}, $\pi$ is leximin-optimal for normalized preferences and thus GEF1 for these preferences (Lemma \ref{lemma:lexmin=>GEF1Tern}). Since no interpersonal comparison are required for GEF1, any allocation satisfying it for normalized preferences also does for non-normalized preferences. Allocation $\pi$ therefore satisfies GEF1.
\end{proof}

%
%

\section{Non-existence of fair allocations}

In this section, we present negative existence results for many GEF-related concepts. We show in particular that as soon as we allow for groups of different size, existence of fair allocations cannot be guaranteed.

%

\begin{example}
	\label{ex:NonExistences}
	Let us consider an instance with three agents, $a_1, a_2$ and $a_3$, and three goods, $o_1, o_2$ and $o_3$, where preferences are additive and defined as follows:
	\begin{center}
		\setlength{\tabcolsep}{6pt}
		\begin{tabular}{c|ccc}
			& $o_1$ & $o_2$ & $o_3$ \\
			\hline
			$a_1$ & \squared{1} & 1 & $\epsilon$ \\
			$a_2$ & 1 & \squared{1} & $\epsilon$ \\
			$a_3$ & 1 & 1 &\squared{$\epsilon$}
		\end{tabular}
	\end{center}
	with $0 < \epsilon < 1/3$. We call $\pi$ the allocation defined by the squared items. First, observe that $S = \{a_3\}$, $T = \{a_1, a_2, a_3\}$ and the reallocation $\pi' = \tuple{\{o_1, o_2, o_3\}}$  are witnesses of the violation of s-GEF1 in $\pi$:
	$$\frac{|S|}{|T|}u_3(\pi'_3\backslash\{o_1\}) = \frac{1}{3}u_3(\{o_2, o_3\}) = \frac{1}{3} (1 + \epsilon) > u_3(\pi_3) = \epsilon.$$
	
	One can see that no allocation satisfies s-GEF1 in this example. Indeed, if an agent receives more than one item then, another one would envy her up to one item. There are therefore no guarantees of existence for allocations satisfying s-GEFX and s-GEF. One can moreover see that $\pi$ is not GP1, hence existence GPX and GP allocation cannot be guaranteed.
	
	Another interesting observation is that $\pi$ maximizes the Nash social welfare but is not s-GEF1.
\end{example}

Next, we show that while the existence of GEF1 allocations is guaranteed when there are only goods with additive preferences, it is no longer the case for monotonic preferences.

\begin{example}
	Let us consider two agents, $a_1$ and $a_2$, whose preferences, $u_1$ and $u_2$, are presented below. The preferences only depend on the number of items received by each agent. Agent $a_2$ gets positive utility only if she receives at least 3 items.
	\begin{center}
		\setlength{\tabcolsep}{6pt}
		\begin{tabular}{c|c|c}
			$X \subseteq \objSet$ & $u_1(X)$ & $u_2(X)$ \\
			\hline
			$|X| = 4$ & 10 & 10 \\
			$|X| = 3$ & 6 & 6 \\
			$|X| = 2$ & 4 & 0 \\
			$|X| = 1$ & 1 & 0 \\
			$|X| = 0$ & 0 & 0
		\end{tabular}
	\end{center}
	In such instance, the only allocations satisfying GEF1 for $S = T = \mathcal{N}$ are the ones in which $a_1$ gets either all the goods, none or exactly one. None of these allocations are EF1, hence no allocation is GEF1.
\end{example}

%

\section{Testing GEF1 is coNP-complete}

We prove in this section that testing GEF1 is coNP-complete when there are only goods, only chores and both of them. The decision problem is the following.
\begin{center}
	\begin{tabular}{rl}
		\toprule
		\multicolumn{2}{c}{\textsc{is-GEF1}}\\
		\midrule
		\textbf{Instance:} & An instance $I = \tuple{\agentSet, \objSet, (u_i)_{i \in \agentSet}} \in \instBoth$ and $\pi \in \Pi(\objSet, \agentSet)$.\\
		\textbf{Question:} & Does $\pi$ satisfy GEF1?\\
		\bottomrule
	\end{tabular}
\end{center}
We use \textsc{is-\GFoGood{}} and \textsc{is-\GFoChore{}} to refer to the same decision problem when there are respectively only goods ($I \in \instGood$) and only chores $(I \in \instChore$).

\begin{theorem}
	\label{thm:complexityTestingGEF1}
	The problems \textsc{is-GEF1}, \textsc{is-\GFoGood{}} and \textsc{is-\GFoChore{}} are strongly coNP-complete.
\end{theorem}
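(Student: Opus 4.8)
The plan is to prove membership in coNP for all three variants at once, and then coNP-hardness by reductions from a strongly NP-complete number problem; the ``both'' case will follow for free from the ``only goods'' case, since $\instGood \subseteq \instBoth$ and GEF1 means the same thing there.

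For membership I would first record a polynomially checkable characterization of a single violation. Fix $S, T$ with $|S|=|T|$ (so the factor $|S|/|T|$ equals $1$) and a reallocation $\pi' \in \Pi(\pi_T, S)$. For each $i \in S$ set $s_i = \max\big(\{0\} \cup \{u_i(o) : o \in \pi_i'^+\} \cup \{-u_i(o) : o \in \pi_i^-\}\big)$, the best single-item deletion available to $i$ from $\pi_i^- \cup \pi_i'^+$. Since every admissible $O_i$ touches only the $i$-th coordinate of both compared vectors and the deletions can be chosen independently across agents, exactly as in the argument of Lemma~\ref{lemma:EFX=>GEF1Identical}, the triple $(S,T,\pi')$ witnesses a violation of GEF1 if and only if $u_i(\pi_i') - u_i(\pi_i) \geq s_i$ for every $i \in S$ with at least one strict inequality. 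This condition is verifiable in polynomial time, so $(S,T,\pi')$ is a polynomial-size, polynomially checkable certificate that $\pi$ is \emph{not} GEF1. Hence the complement of each of the three problems is in NP, i.e. each problem is in coNP; as the certificate check never manipulates large numbers, this already yields membership in the strong sense.

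For hardness of \textsc{is-\GFoGood{}} I would reduce from \textsc{3-Partition} (strongly NP-complete, with polynomially bounded integers), building an instance and an allocation $\pi$ that fails GEF1 \emph{iff} the \textsc{3-Partition} instance is solvable. Given integers $a_1,\dots,a_{3m}$ with $\sum_j a_j = mB$ and each $a_j \in (B/4, B/2)$, I would introduce $3m$ ``number goods'' of value $a_j$ together with $m$ ``padding goods'' of a large common value $V > mB$, all placed in the bundles of a designated holder group $T$, and an envious group $S$ of $m$ agents whose current bundles carry carefully tuned target values near $B$. The padding goods absorb the ``up to one'' deletion: in the intended witness each agent of $S$ receives exactly one padding good (its unique maximum, hence the deleted item) plus three number goods, so that after deletion her utility equals the sum of her three number goods. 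Choosing the targets so that each of these sums must reach $B$, while $\sum_j a_j = mB$ forces equality, makes the surviving group-envy equivalent to splitting the $a_j$ into triples summing to $B$; a single slightly lowered target supplies the one strict inequality required by Pareto-domination.

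The main obstacle, and where the bulk of the work lies, is the converse direction: showing that \emph{no} unintended witness $(S,T,\pi')$ can exist unless \textsc{3-Partition} is solvable. This requires (i) forcing every agent of the intended $S$ to take exactly one padding good and exactly three number goods --- achieved by the value gap $V > mB$ (so a padding good is always the deleted maximum) together with threshold and counting arguments that rule out agents taking zero or two padding goods --- and (ii) designing the rest of $\pi$ and suitable dummy agents so that the enviers $S$ and holders $T$ are essentially the only equal-sized groups that can realise an up-to-one group envy, ruling out spurious witnesses over other groups. For \textsc{is-\GFoChore{}} I would mirror the construction with chores (negating utilities and routing the ``up to one'' deletion through $\pi_i^-$ rather than $\pi_i'^+$), obtaining the symmetric reduction. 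Finally, since an only-goods instance is a legal input of \textsc{is-GEF1} with the same answer, the hardness of \textsc{is-\GFoGood{}} transfers verbatim to \textsc{is-GEF1}, completing all three strong coNP-completeness claims.
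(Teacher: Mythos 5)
Your overall architecture --- coNP membership via the certificate $(S,T,\pi')$ checked through the thresholds $s_i$, strong coNP-hardness via reductions from \textsc{3-Partition}, and the mixed goods-and-chores case inherited for free from the only-goods case --- is exactly the paper's, and the membership half of your argument is fine. The gap is in the hardness reduction, which is the heart of the theorem and which you only sketch; worse, the sketch as described cannot be rescued by the ``threshold and counting arguments'' you invoke. Because your padding goods have a \emph{common} value $V$, they can be stacked on a single envier, freeing her from any need for number goods, while another envier compensates by grabbing four or more number goods and sacrificing her largest one as the deleted item. Concretely, take $m=2$, $B=100$, $X=\{26,26,26,30,46,46\}$: this is a NO-instance of \textsc{3-Partition} (no triple sums to $100$), each value lies in $(B/4,B/2)$, and the total is $2B$. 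With two enviers whose targets are (about) $100$ and two holders owning the two padding goods and the six number goods, consider the reallocation giving envier $1$ the goods $\{26,26,26,30,46\}$ and envier $2$ the remaining $46$ plus \emph{both} padding goods. After any admissible single deletion, envier $1$ retains at least $154-46=108>100$ and envier $2$ retains at least $\min\{V+46,\,2V\}>100$, so the reallocation Pareto-dominates for every choice of deleted items and GEF1 is violated on a NO-instance. Your reduction therefore gives the wrong answer.

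The ingredient you are missing, and which the paper's construction supplies, is that the large ``absorbing'' item must be \emph{personalized}: in the paper, agent $a_i$ values her own good $g_i$ at $m+1-\epsilon$ while every other agent values it at $0$ (at $-M$ in the chores construction), so it cannot be profitably stacked on anyone else, and any improving reallocation must hand $g_i$ back to $a_i$; this pins each agent's deletion threshold and forces each of them to extract essentially one unit of value from the number items, which is what makes the counting sound. On top of that, the paper adds a cyclic gadget (the $h$-goods, respectively $h$-chores) and a dummy agent $b$ holding the number items (compensated through $l^*_1,l^*_2$) precisely in order to prove that any witness of a violation must have $S=T=\agentSet$ --- the ``ruling out spurious witnesses over other groups'' step that you yourself identify as the bulk of the work but leave entirely unaddressed. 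Note also that the paper's chores reduction is not a mirror image of its goods reduction but a separate gadget with its own forcing argument. Without these ingredients neither direction of your claimed equivalence holds, so the proposal does not establish the theorem.
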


\begin{proof}[Sketch of the proof]
	We present the reduction for the \textsc{is-\GFoChore{}} problem. By reducing the \textsc{3-Partition} problem \citep{GaJo75}, we show that checking if $\pi$ violates GEF1 when there are only chores is strongly NP-complete.
	\begin{center}
		\begin{tabular}{rl}
			\toprule
			\multicolumn{2}{c}{\textsc{3-Partition}}\\
			\midrule
			\textbf{Instance:} & A multi-set of $3m$ numbers $X = \{x_1, \ldots, x_{3m}\}$ such that:\\
			& $\forall x \in X, 1/4 < x < 1/2$ and $\sum_{x \in X} x = m$.\\
			\textbf{Question:} & Is there a partition $(X_i)_{i \in \llbracket 1, m \rrbracket}$ of $X$ such that $\forall i, \sum_{x \in X_i} = 1$ ? \\
			\bottomrule
		\end{tabular}
	\end{center}
	
	Let $X = \{x_1, \ldots, x_{3m}\}$ be an instance of the \textsc{3-Partition} problem. We present in the following its corresponding instance $(I, \pi)$ of the \textsc{is-\GFoChore{}} problem. The set of chores is $\objSet = \{g_1, \ldots, g_m\} \cup \{h_1, \ldots, h_m\} \cup \{l_1, \ldots, l_{3m}\} \cup \{o_1, \ldots, o_{2m}\}$ and the set of agents $\agentSet = \{a_1, \ldots, a_m\} \cup \{b_1, \ldots, b_m\}$. The utilities of the singletons are as follows:
	\begin{center}
		\vspace{-0.4cm}
		\resizebox{\linewidth}{!}{
			$\begin{array}{c|ccc|ccc|ccccccc|cccccc}
			& g_1 & \ldots & g_m & h_1 & \ldots & h_m & l_1 & l_2 & l_3 & \ldots & l_{3m - 2} & l_{3m - 1} & l_{3m} & o_1 & \ldots & o_m & o_{m + 1} & \ldots & o_{2m} \\
			\hline
			a_1 & \boxed{-m - \epsilon} & -M & -M & \boxed{-1 - \epsilon} & -M & 0 & -x_1 & -x_2 & -x_3 & \cdots & -x_{3m - 2} & -x_{3m - 1} & -x_{3m} & \boxed{0} & -M & -M & -M & -M & -M  \\
			\vdots & -M & \ddots & -M & -M & \ddots & -M & -x_1 & -x_2 & -x_3 & \cdots & -x_{3m - 2} & -x_{3m - 1} & -x_{3m} & -M & \ddots & -M & -M & -M & -M \\
			a_m & -M & -M & \boxed{-m - \epsilon} & -M & -M & \boxed{-1 - \epsilon} & -x_1 & -x_2 & -x_3 & \cdots & -x_{3m - 2} & -x_{3m - 1} & -x_{3m} & -M & -M & \boxed{0} & -M & -M & -M  \\
			\hline
			b_1 & -x_2 - x_3 & -M & -M & -M & -M & -M & \boxed{-x_1} & \boxed{-x_2} & \boxed{-x_3} & -M & -M & -M & -M & -M & -M & -M & \boxed{0} & -M & -M \\
			\vdots & -M & \ddots & -M & -M & -M & -M & -M & -M  & -M & \ddots & -M & -M & -M & -M & -M & -M & -M & \ddots & -M \\
			b_m & -M & -M & -x_{3m - 1} - x_{3m} & -M & -M & -M & -M & -M & -M & -M & \boxed{-x_{3m - 2}} & \boxed{-x_{3m - 1}} & \boxed{-x_{3m}} & -M & -M & -M & -M & -M & \boxed{0}
			\end{array}$}
	\end{center}
	where $\epsilon > 0$ is a constant small enough, $M$ is a constant greater than $m + 1$ and the $x_i$ are assumed to be ordered in a decreasing order: $\forall i \in \llbracket 1, 3m \rrbracket, x_i \geq x_{i + 1}$.
	
%
	
	The initial allocation $\pi \in \Pi(\objSet, \agentSet)$ is depicted by the boxed items in the previous table. It is defined as $\pi_{a_i} = \{g_i\} \cup \{h_i\} \cup \{o_i\}$ for every $i \in \llbracket 1, m \rrbracket$ and $\pi_{b_i} = \{l_{3i - 2}, l_{3i - 1}, l_{3i}\} \cup \{o_{m + i}\}$ for all $i \in \llbracket 1, m \rrbracket$.
	
	In the next step, omitted, we show that $\pi$ violates GEF1 if and only if there exists a suitable partition of $X$ satisfying the \textsc{3-Partition} problem. In particular, we show that if two groups $S$ and $T$ are witnesses of a violation of GEF1, then $S = T = \agentSet$. 
\end{proof}

In this proof, the only possible violations of GEF1 are such that $S = T = \agentSet$. Hence, checking whether an allocation satisfies the Pareto-optimality relaxation derived from GEF1, is also coNP-complete. These results are similar in flavour to the results of \citet{KBKZ09} and \citet{ABL+19a} that testing Pareto-optimality is coNP-complete.

\section{Conclusions}


Inspired by the group envy-freeness concept and its relaxations for the case of divisible goods, we formalized several relaxations of the concept for indivisible goods and chores. The concepts have both fairness and efficiency flavours. Our definitions are general and our key concepts work well for ordinal preferences as well as cardinal utilities involving goods and chores. We clarified the relation of GEF1 with other fairness concepts and presented several positive and negative computational results. 

Several interesting questions arise as a result of our study. The main question left open is the existence of GEF1 allocations when there are goods and chores. The question has been answered positively in the case of goods. However the proof involves the Nash social welfare which cannot be used with chores. Considering that protection of groups is one of the central concerns in new research on algorithmic fairness, we envisage GEF1 and its variants to spur further interesting work in the area.

	
	\bibliographystyle{plainnat}
	\bibliography{abbshort,GEF}
	
	\appendix
	\chapter*{Appendix}
	
	\section{The Egal-Sequential Algorithm for Identical Utilities}
	
	\subsection{Proof of Lemma \ref{lemma:egalSeqAlgoCorrect}}
	We call $u$ the common utility function. We show by induction on the number of items that have already been allocated that the Egal-Sequential Algorithm maintains a partial allocation that is EFX. Let us denote by $\pi^k$ the partial allocation constructed after allocating the $k$-th item.
	
	The base case for $k = 1$ is straightforward. EFX is trivially satisfied by $\pi^1$ as only one item has been allocated.
	
	Let us now suppose that for a given $k < m$ the partial allocation $\pi^{k}$ is EFX. We show that $\pi^{k + 1}$ also satisfies EFX. Let $o$ be the item that is allocated to agent $i^*$ at the $k + 1$-th step of the algorithm. Let us distinguish two cases depending on whether $o$ is a good or a chore.
	\begin{itemize}
		\item If $u(o) \geq 0$, the only change in the utilities of the agents is an increase of $i^*$'s utility. The only possibility to violate EFX would therefore be for agents to become envious up to any item of $i^*$. As $i^* \in \arg \min_{i \in \agentSet} u(\pi^{k}_i)$, no agent envies $i^*$ in $\pi^{k}$: $\forall j \in \agentSet, u(\pi^{k}_{i^*}) \leq u(\pi^{k}_j)$. This implies that $\forall j \in \agentSet, u(\pi^{k + 1}_{i^*} \backslash \{o\}) \leq u(\pi^{k + 1}_j)$. Since $o$ is the smallest item allocated at step $k$, no agent envies $i^*$ up to any item in $\pi^{k + 1}$. Hence, $\pi^{k + 1}$ is EFX.
		\item If $u(o) < 0$, the only change in the utilities is a decrease of $i^*$'s utility. Therefore, the only possibility for EFX to be violated would be if $i^*$ becomes envious up to any item of another agent. Since $i^* \in \arg \max_{i \in \agentSet} u(\pi^{k}_i)$, $i^*$ does not envy anyone in $\pi^{k}$, that is, $\forall j \in \agentSet, u(\pi^{k}_{i^*}) \geq u(\pi^{k}_j)$. This implies that $\forall j \in \agentSet, u(\pi^{k + 1}_{i^*} \backslash \{o\}) \geq u(\pi^{k + 1}_j)$. Thus, $i^*$ is still not envious. Hence, $\pi^{k + 1}$ is still EFX.
	\end{itemize}

	We have then proved that the Egal-Sequential Algorithm returns EFX allocations when preferences are identical.
	
	Sorting items can be done in time in $\mathcal{O}(m \log m)$. The for loop of the algorithm uses $m$ steps during which finding an agent with maximum or minimum utility can be done in time in $\mathcal{O}(n)$ hence the overall complexity is $\mathcal{O}(\max\{m \log m, mn\})$.
		
	\subsection{Proof of Lemma \ref{lemma:EFX=>GEF1Identical}}
	
	Let us consider an allocation $\pi$ that satisfies EFX but not GEF1. Since preferences are identical we can assume, without loss of generality, that no item gives zero utility. If such items exist, they can be allocated to any agent without changing anything. As $\pi$ is not GEF1, there exist two groups $S \subseteq \agentSet$ and $T \subseteq \agentSet$ and a reallocation $\pi' \in \Pi(\pi_T, S)$ such that $\forall i \in S, \forall o \in \pi_i^- \cup \pi_i^+, u(\pi_i' \backslash \{o\}) \geq u(\pi_i \backslash \{o\})$ with one inequality being strict. For $i \in S$, we introduce $s_i$ defined as:
	$$s_i = \max\{\max_{o \in \pi_i'^+} u(o), \max_{o \in \pi_i^-} -u(o), 0\}.$$
	Violating GEF1 then implies: $\forall i \in S, u(\pi'_i) - u(\pi_i) \geq s_i$ with one inequality being strict. The 0 component in the definition of $s_i$ is meant to tackle the case when an agent $i$ receives an empty allocation in $\pi$ and no goods in $\pi'$. If $\pi_i = \pi_i'^+ = \emptyset$ we should have $\pi_i^- = \emptyset$ to get a GEF1 violation, hence we have $u(\pi_i) = u(\pi_i')$ and $s_i = 0$ is a suitable bound.
	
	By summing the inequalities coming from the violation of GEF1 over $i \in S$, we obtain $\sum_{i \in S} u(\pi_i') - \sum_{i \in S} u(\pi_i) > \sum_{i \in S} s_i$. Since utilities are identical and additive this implies:
	\begin{align}
	\label{eq:proofEFX=>GEF1Line1'}
	u(\pi_T) - u(\pi_S) > \sum_{i \in S} s_i.
	\end{align}
	
	Moreover, as $\pi$ is EFX, we have $\forall i, j \in \agentSet, \forall o \in \pi_i^- \cup \pi_j^+, u(\pi_i \backslash \{o\}) \geq u(\pi_j \backslash \{o\})$. For $i, j \in \agentSet$, we introduce $s_{i, j}$ defined by:
	$$s_{i, j} = -\min\{\min_{o \in \pi_j^+} -u(o), \min_{o \in \pi_i^-} u(o), 0\}.$$
	The 0 component is once again here to take care of the case when $\pi_i^- \cup \pi_j^+ = \emptyset$. We have thus:
	\begin{align}
	\label{eq:proofEFX=>GEF1Line2'}
	\forall i, j \in \agentSet, s_{i, j} \geq u(\pi_j) - u(\pi_i).
	\end{align}
	
	Our goal is now to sum up inequalities describe in \eqref{eq:proofEFX=>GEF1Line2'} to obtain a contradiction with \eqref{eq:proofEFX=>GEF1Line1'}. We are looking for a set of pairs $(i, j)$ such that each $i \in S$ and each $j \in T$ appear once and only once, and such that the sum of $s_{i, j}$ over these pairs is smaller than $\sum_{i \in S} s_i$. To do so, we find a suitable matching in a bipartite graph.
	
	\medskip
	
	Let us consider the bipartite graph $G = \tuple{S \cup T, E}$ where nodes represent agents in $S$ and in $T$. There is an edge $(i, j) \in E$ between agents $i \in S$ and $j \in T$ if and only if $\pi_i'^+ \cap \pi_j\neq \emptyset$, that is, $i$ receives some of $j$'s goods in $\pi'$. We consider a partition $S^+ \cup S^-$ of $S$ where:
	\begin{align*}
	S^+ &= \{ i \in S \mid u(\pi_i) \geq 0 \} \\
	S^- &= \{ i \in S \mid u(\pi_i) < 0 \}.
	\end{align*}
	For $X \subseteq S$, we write $N(X)$ its neighbourhood in the graph $G$: $N(X) = \{j \in T \mid \exists i \in X, (i, j) \in E\}$. A symmetric definition holds for $N(Y)$ where $Y \subseteq T$.
	
	\medskip
	
	We claim that there always exists a matching $M \subseteq S \times T$ in $G$ that matches all the agents in $S^+$. Suppose for the sake of contradiction that such $M$ does not exist. From Hall's theorem \citep{Hall35}, there must exist $X\subseteq S^+$ and $Y \subseteq T$ such that $Y = N(X)$ and $|Y| < |X|$. Let us assume that $X$ is a smallest Hall's violation in $G$, and consider $M' \subseteq X \times Y$ a maximum matching between agents in $X$ and in $Y$. 
	
	We show in the following that $M'$ always match all the agents in $Y$. To get a better understanding, a diagram illustrating the different set of agents considered is presented in Figure \ref{fig:proofEFX=>GEF1}. Assume that $M'$ does no match all the agents in $Y$, Hall's theorem implies that there exist $X' \subseteq X$ and $Y' \subseteq Y$ such that $X' = N(Y')$ and $|X'| < |Y'|$. Then for $X'' = X \backslash X'$ and $Y'' = Y \backslash Y'$, we have $|X''| > |Y''|$ and $N(X'') = Y''$. Indeed, as $|X| > |Y|$ we have $|X| - |X'| > |Y| - |X'|$ and thus $|X''| > |Y''|$ since $|X'| < |Y'|$. Moreover it is clear that $N(X'') = Y''$ as otherwise any agent in $X''$ linked to another agent in $Y'$ is in $N(Y') = X'$ and can not be in $X''$. Overall, $X''$ constitutes a Hall's violation for the existence of $M$ which is smaller than $X$. This contradicts the fact that $X$ is the smallest such Hall's violation. Sets $X'$ and $Y'$ does not exist and $M'$ matches all the agents in $Y$.
	\begin{figure}
		\centering
		\resizebox{!}{4cm}{
			\begin{tikzpicture}
			\node[] at (1, 4.4) {$S$};
			
			\draw[] (0, 3.5) rectangle (2, 4);
			\node[] at (1, 3.75) {$S^-$};
			\draw[] (0, 0) rectangle (2, 3.5);
			\node[] at (1, 3.125) {$S^+$};
			
			\draw[] (0.25, 2) rectangle (1.75, 2.75);
			\node[] at (1, 2.375) {$X'$};
			\draw[] (0.25, 0.25) rectangle (1.75, 2);
			\node[] at (1, 1.125) {$X''$};
			
			\draw[] (6, 0) rectangle (8, 4);
			\node[] at (7, 4.4) {$T$};
			
			\draw[] (6.25, 1.25) rectangle (7.75, 2.5);
			\node[] at (7, 1.875) {$Y'$};
			\draw[] (6.25, 0.5) rectangle (7.75, 1.25);
			\node[] at (7, 0.875) {$Y''$};
			\draw[dashed] (1.75, 2.75) -- (6.25, 2.5);
			\draw[dashed] (1.75, 2) -- (6.25, 1.25);
			\draw[dashed] (1.75, 0.25) -- (6.25, 0.5);
			\end{tikzpicture}}
		\caption{Sets of agents considered in the proof of Lemma \ref{lemma:EFX=>GEF1Identical}.}
		\label{fig:proofEFX=>GEF1}
	\end{figure}
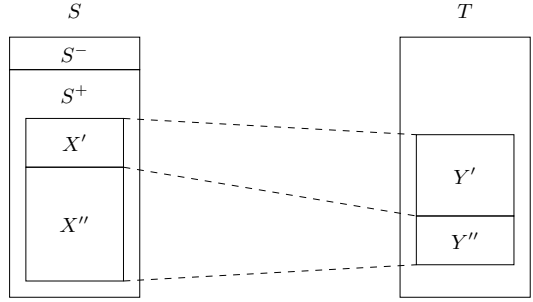
	
	We now turn back to the problem of showing that $M$ matches all the agents in $S^+$. As $M'$ matches all the agents in $Y$ and $u$ is additive, we have $\sum_{(i, j) \in M'} u(\pi_j) = u(\pi_Y)$. Hence, by summing inequalities \eqref{eq:proofEFX=>GEF1Line2'} for $(i, j) \in M'$, we obtain:
	\begin{align}
	\label{eq:proofEFX=>GEF1Line3'}
	u(\pi_Y) - \sum_{(i, j) \in M'} s_{i, j} \leq \sum_{(i, j) \in M'} u(\pi_i)
	\end{align}
	Moreover summing inequalities yielded by the violation of GEF1 over $i \in S$ in a similar manner as inequalities \eqref{eq:proofEFX=>GEF1Line1'} brings $u(\pi_Y) - \sum_{i \in X} s_i \geq u(\pi_X)$. Together with \eqref{eq:proofEFX=>GEF1Line3'}, this leads to:
	\begin{align}
	\label{eq:proofEFX=>GEF1Line4'}
	\sum_{(i, j) \in M'} u(\pi_i) + \sum_{(i, j) \in M'} s_{i, j} \geq u(\pi_Y) \geq u(\pi_X) + \sum_{i \in X} s_i.
	\end{align}
	Observe that for any pair of agents $(i, j)$ such that there is an edge between $i$ and $j$ in $G$, we have:
	\begin{align}
	\label{eq:proofEFX=>GEF1Line5'}
	s_i \geq \max_{o \in \pi_i'^+} u(o) \geq \max_{o \in \pi_i'^+ \cap \pi_j} u(o) \geq \min_{o \in \pi_j^+} u(o) \geq s_{i, j}.
	\end{align}
	In addition, as $|M'| < |X|$, it is clear that $\sum_{(i, j) \in M'} u(\pi_i) < \sum_{i \in X} u(\pi_i)$ and thus $\sum_{(i, j) \in M'} u(\pi_i) < u(\pi_X)$. Overall, we have:
	$$\sum_{(i, j) \in M'} u(\pi_i) + \sum_{(i, j) \in M'} s_{i, j} < u(\pi_X) + \sum_{i \in X} s_i,$$
	which contradicts \eqref{eq:proofEFX=>GEF1Line4'}. Therefore, no Hall's violation can exist and we have proved that the matching $M$ does match all the agents in $S^+$. 
	
	\medskip
	
	Next we show that the existence of this matching $M$ leads to a contradiction on the fact that $\pi$ is not GEF1. We extend the matching $M$ to match all the agents in $S$ by arbitrarily pairing each agent $i\in S^-$ with an unmatched agent in $T$. Let the extended matching be called $M^*$. Observe that for any new pair of agents $(i, k) \in M^* \backslash M$, we have $i \in S^-$, that is $\pi_i^- \neq \emptyset$. Hence, for any agent $j \in T$, we have:
	\begin{align}
	\label{eq:proofEFX=>GEF1Line6'}
	s_{i, j} \leq \min_{o \in \pi_i^-} u(o) \leq \max_{o \in \pi_i^-} u(o) \leq s_i.
	\end{align}
	By summing \eqref{eq:proofEFX=>GEF1Line2'} over $(i, j) \in M^*$ we obtain $\sum_{(i, j) \in M^*} s_{i, j} \geq u(\pi_T) - u(\pi_S)$. From \eqref{eq:proofEFX=>GEF1Line5'} and \eqref{eq:proofEFX=>GEF1Line6'} we get that $\sum_{(i, j) \in M^*} s_{i, j} \leq \sum_{i \in S} s_i$, hence summing \eqref{eq:proofEFX=>GEF1Line2'} over $(i, j) \in M^*$ leads to a contradiction with \eqref{eq:proofEFX=>GEF1Line1'}. We have thus proved that $\pi$ satisfies both EFX and GEF1.
	
	\subsection{Proof of Theorem \ref{thm:existGEF1Identical}}
	
	The proof is trivial, it just amounts at linking Lemma \ref{lemma:egalSeqAlgoCorrect} and Lemma \ref{lemma:EFX=>GEF1Identical}.
	
	\section{The Ternary Flow Algorithm}
	
	Let us first introduce briefly the theory of network flows as well as our notations. A cost flow network is represented by a directed graph $G = \tuple{V, E}$ whose nodes $v \in V$ are labelled with a demand $d(v) \in \mathbb{R}$ while edges $e \in E$ are labelled by a maximum capacity $\delta(e) \in \mathbb{R}_{> 0}$ and a cost $c(e) \in \mathbb{R}_{\geq 0}$. A flow $f$ is a mapping $f: E \rightarrow \mathbb{R}_{\geq 0}$ where $f(e)$ is the amount of flow passing through the edge $e$. By a slight abuse of notation, we use $f(v)$ for a node $v \in V$ to denote the difference between the outgoing flow and the incoming flow in $v$. A flow is realizable if for every edge $e \in E$ its capacity is not exceeded, $f(e) \leq \delta(e)$, and for every node $v \in V$ its demand is satiated, $f(v) = d(v)$. The cost of a realizable flow $f$ is defined by $c(f) = \sum_{e \in E} f(e) c(e)$. A minimum cost flow is then a flow with minimal cost among the realizable flows. A flow is said to be integer if $\forall e \in E, f(e) \in \mathbb{N}$. 
	
	Then we present the Nash Flow Algorithm proposed by \cite{DaSc15}. Let $I = \tuple{\agentSet, \objSet, (u_i)_{i\in\agentSet}} \in \instGood$ be an instance with only goods and where preferences are normalized ternary symmetric. Note that since there are only goods, the utilities for the singletons are in $\{0, 1\}$. The network $G = \tuple{V, E}$ consists in the set of nodes $V = \{s, t\} \cup \agentSet \cup \objSet \cup \{t_{i, j} \mid i \in \agentSet, o_j \in \objSet\}$ where the demands are $d(s) = m$, $d(t) = -m$ and $d(v) = 0$ for every $v \in V \backslash \{s, t\}$. The edge set $E$ is defined by:
	\begin{itemize}
		\item for every item $o \in \objSet$, the edge $(s, o)$ is in $E$ with capacity $\delta(s, o) = 1$ and cost $c(s, o) = 0$,
		\item for every agent $i \in \agentSet$, for every item $o \in \objSet$ such that $u_i(o) = 1$, the edge $(o, i)$ is in $E$ with capacity $\delta(o, i) = 1$ and cost $c(o, i) = 0$,
		\item for every agent $i \in \agentSet$ and every item $o_j \in \objSet$, two edges are in $E$: $(i, t_{i, j})$ with capacity $\delta(i, t_{i, j}) = 1$ and cost $c(i, t_{i, j}) = n^j$ and $(t_{i, j}, t)$ with capacity $\delta(t_{i, j}, t) = 1$ and cost $c(t_{i, j}, t) = 0$. In the following we refer to $(i, t_{i, j})$ edges as ``t'' edges.
	\end{itemize}
	Since the capacities are all integer, there always exists a minimum cost integer flow of $G$ by the integrality property. We have moreover a one-to-one correspondence between integer flows in $G$ and allocations in $\Pi(\agentSet, \objSet)$. From a flow $f$, an allocation $\pi$ is defined by $o \in \pi_i$ if and only if $f(o, i) = 1$. From an allocation $\pi$ the flow $f$ is defined by $f(s, o) = 1$ for each $o \in \objSet$, $f(o, i) = 1$ if and only if $o \in \pi_i$ for every $o \in \objSet$ and $i \in \agentSet$, and $f(i, t_{i, j}) = f(t_{i, j}, t) = 1$ for every $i \in \agentSet$ and $1 \leq j \leq u_i(\pi_i)$. An allocation $\pi$ maximizes the Nash social welfare if and only if it corresponds to a minimum cost integer flow of the network $G$.
		
	\subsection{Proof of Lemma \ref{lemma:TernNashFlow=>Lex}}
	
	For a flow $f$, we denote by $f_i$, $i \in \agentSet$, the amount of flow passing through agent node $i$, that is $f_i = \sum_{j \in \objSet} f(j, i)$. Note that $f_i$ is then the number of good received by agent $i$ in the allocation corresponding to $f$. Since utilities are normalized ternary, it is also $i$'s utility, $u_i(\pi_i)$.
	
	We first show that a leximin-optimal allocation $\pi$ corresponds to a minimum cost flow $f$. Let us suppose toward a contradiction that there exists two agents $i_1$ and $i_2$ with $f_{i_1} - f_{i_2} \geq 2$ and a good $o$ such that $(o, i_2) \in E$ and $f(o, i_2) = 1$. We have thus $u_{i_1} (o) = u_{i_2} (o) = 1$. Consider then the allocation $\pi'$ such that $\pi'_k = \pi_k, \forall k \in \agentSet\backslash\{i_1, i_2\}$, $\pi'_{i_1} = \pi_{i_1} \backslash \{o\}$ and $\pi'_{i_2} = \pi_{i_2} \cup \{o\}$ and denote by $f'$ the corresponding flow. Since $f_{i_1} - f_{i_2} \geq 2$, agent $i_2$ comes before agent $i_1$ in the lexmin ordering of the utilities in $\pi$. It is also the case in $\pi'$ since $f'_{i_1} - f'_{i_2} = f_{i_1} - f_{i_2} - 2$. Moreover we have $u_{i_2}(\pi'_{i_2}) > u_{i_2}(\pi_{i_2})$, hence $\vec{u}(\pi') \succ_{lex} \vec{u}(\pi)$ which is a contradiction. Agents $i_1$ and $i_2$ do not exist which is a sufficient condition\footnote{\citet[proof of Theorem 3.3, step 1]{DaSc15} showed that an integer flow $f$ is a minimum cost flow if and only if $\forall i \in \agentSet, \forall h\text{ s.t. }1 \leq h \leq f_i, f(i, t_{i, h}) = 1$ and there is no sequence $\tuple{i_1, o_1, i_2, o_2, \ldots, o_{l - 1}, i_l}$ with $f_{i_1} - f_{i_l}$ such that for all $1 \leq h \leq l - 1$, we have $(o_h, i_{h + 1}) \in E$ and $f(o_h, i_h) = 1$.} for $f$ to be a minimum cost flow. 
	
	Next we show the other implication. For a flow $f$, we write $\vec{f}$ the vector of the $f_i$ ordered from the lowest to the highest. Observe that $\vec{f} = \vec{u}(\pi)$ for $\pi$ the allocation corresponding to $f$. \citet[proof of Theorem 3.3, step 3]{DaSc15} proved that all minimum cost integer flows have the same vector $\vec{f}$. We just proved that there exists a minimum cost integer flow $f^*$ corresponding to a leximin-optimal allocation $\pi^*$, hence any minimum cost integer flow $f$ is such that $\vec{f} = \vec{f^*}$, which proves that the allocation $\pi$ corresponding to $f$ is leximin-optimal.
	
	\subsection{Proof of Lemma \ref{lemma:algoLeminTernCorrect}}
	
	We show that along the execution of Algorithm \ref{algo:lexminTern}, the partial allocation $\pi$ is always leximin-optimal. 
	
	Based on Lemma \ref{lemma:TernNashFlow=>Lex}, we know that after line \ref{algoLine:lexminTernNashFlow} of Algorithm \ref{algo:lexminTern}, $\pi$ is leximin-optimal for the utility profile $(u_i')_{i \in \agentSet}$. Since agents are only allocated items of utility 1, $\pi$ is also leximin-optimal for $(u_i^{\text{Norm}})_{i \in \agentSet}$.
	
	Next, we show by induction that $\pi$ is always leximin-optimal according to the normalized utilities throughout the first ``for loop'' (line \ref{algoLine:lexminTernFor1}) where items in $O^-$ are allocated. In the following, leximin-optimality is according to the normalized utilities. Consider one step of the loop, we call $o \in O^-$ the chore that is to be allocated, $\mu$ the current allocation, that is leximin-optimal, and $\pi$ the allocation where $o$ has been allocated to agent $i^* \in \arg\max_{i \in \agentSet} u_i^{\text{Norm}}(\pi_i)$. In the following, we use $\vec{u}$ to refer to the leximin ordering according to preferences $(u_i^{\text{Norm}})_{i \in \agentSet}$. Suppose that $\pi$ is not leximin-optimal, then there exist an allocation $\pi'$ so that there exists an integer $k \in \llbracket 1, n \rrbracket$ such that $\vec{u}_j(\pi') = \vec{u}_j(\pi), \forall j < k$, and $\vec{u}_k(\pi') > \vec{u}_k(\pi)$. Since $i^* \in \arg\max_{i \in \agentSet} u_i^{\text{Norm}}(\pi_i)$ we can assume without loss of generality that $i^*$'s index in $\vec{u}(\pi)$ is $n$. From the induction hypothesis we know that $\mu$ is leximin-optimal, hence $n \leq k$ as otherwise the allocation obtained from $\pi'$ by deleting chore $o$ would leximin-dominates $\mu$. We have thus $k = n$, that is $\sum_{i \in \agentSet} \vec{u}_i(\pi') > \sum_{i \in \agentSet} \vec{u}_i(\pi)$. However since $o$ is a chore for every agent and $\mu$ is leximin-optimal, we know from Lemma \ref{prop:characPOTernaryUtil} that $\pi$ is Pareto-optimal. Moreover, $\pi'$is also Pareto-optimal, since it is leximin-optimal, Lemma \ref{prop:characPOTernaryUtil} implies then $\sum_{i \in \agentSet} \vec{u}_i(\pi) = \sum_{i \in \agentSet} \vec{u}_i(\pi')$. This is a contraction. We have then proved that the allocation $\pi$ is still leximin-optimal after allocating chore $o$. The induction is settled and after the first ``for loop'', the partial allocation $\pi$ is leximin-optimal.
	
	Finally, note that the items in $O^0$ are allocated to agents that have 0 utility for them. Hence, the allocation $\pi$ is still leximin-optimal after the second ``for loop''. After this last loop, $\pi$ is no longer a partial allocation, all the items have been allocated, hence Algorithm \ref{algo:lexminTern} returns $\pi$ that is leximin-optimal according to the normalized utilities.
	
	\subsection{Proof of Lemma \ref{lemma:lexmin=>GEF1Tern}}
	
	In the following we consider an allocation $\pi$, supposed to be leximin-optimal. We first show that the agents' utility can not be too different because of leximin-optimality.
	\begin{claim}
		For every $i, j \in \agentSet$, if $u_j(\pi_j) - u_i(\pi_i) \geq 2$ then $\{o \in \pi_j \mid u_i(o) = 1\} \cup \{o \in \pi_i \mid u_j(o) = -1\} = \emptyset$.
	\end{claim}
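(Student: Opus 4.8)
The plan is to argue by contradiction from the leximin-optimality of $\pi$. I would assume that some pair $i, j \in \agentSet$ witnesses $u_j(\pi_j) - u_i(\pi_i) \geq 2$ while the displayed union is nonempty, pick an offending item $o$, and build from it a \emph{single-item transfer} producing an allocation $\pi'$ with $\vec{u}(\pi') \succ_{lex} \vec{u}(\pi)$, contradicting leximin-optimality. The key structural fact I would lean on is that, since the preferences are normalized ternary symmetric, every singleton value lies in $\{-1, 0, 1\}$, so moving one item changes any agent's utility by at most one; this is what keeps the case analysis finite. Throughout I would write $a = u_i(\pi_i)$ and $b = u_j(\pi_j)$, so that $b \geq a+2$, and I would keep in mind that $a$ is the smaller of the two relevant values and that $i$ is strictly poorer than $j$.

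The two parts of the union correspond to two canonical gap-reducing transfers. First, if $o \in \pi_j$ with $u_i(o) = 1$, I would move $o$ from $j$ to $i$: then $i$'s utility rises to $a+1$ while $j$'s becomes $b - u_j(o)$. If $u_j(o) \leq 0$ this is a Pareto-improvement ($i$ strictly up, $j$ weakly up), which is automatically leximin-improving; if $u_j(o) = 1$ it is a ``Pigou--Dalton'' exchange moving one unit from the richer to the poorer agent, and since $b - 1 \geq a+1$ the transfer does not overshoot. Symmetrically, the second part concerns an item in $\pi_i$ that drives the gap, namely one the poorer agent $i$ can shed toward $j$; here I would move $o$ from $i$ to $j$, so that $i$ rises by $-u_i(o)$ and $j$ changes by $u_j(o) \in \{-1,0,1\}$, again yielding either a Pareto-improvement or a gap-reducing Pigou--Dalton exchange. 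At this point I would invoke Lemma~\ref{prop:characPOTernaryUtil}: since any leximin-optimal allocation is Pareto-optimal, the characterization pins down, item by item, the sign pattern of who may hold each object, and I would use it to rule out the subcases in which the naive transfer fails to reduce the gap, reducing each part to a genuine equalizing move.

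The technical heart, and the step I expect to require the most care, is verifying that an equalizing transfer strictly advances the \emph{leximin} order and not merely the utilitarian sum. I would isolate the following local fact: if an allocation is modified only at two agents whose sorted values $(a, b)$ with $a < b$ and $b \geq a+2$ are replaced by $(a+1, b-1)$, then the sorted utility vector strictly leximin-increases. The clean way to establish this is a threshold count: the number of agents with utility strictly below $a+1$ drops by exactly one (agent $i$ leaves that set, while $j$ never belonged to it because $b-1 \geq a+1$), and no agent's utility falls below $a$; hence the two sorted vectors agree on all entries of value less than $a$ and on the shared copies of $a$, and at the first index where they differ $\pi'$ is strictly larger. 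Combining this local fact with the two transfers of the previous paragraph contradicts the leximin-optimality of $\pi$ and establishes the claim.
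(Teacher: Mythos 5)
Your treatment of the first set, $\{o \in \pi_j \mid u_i(o) = 1\}$, is correct and follows the same route as the paper: transfer such an $o$ from the richer agent $j$ to the poorer agent $i$; since singleton values lie in $\{-1,0,1\}$ and $u_j(\pi_j) - u_i(\pi_i) \geq 2$, agent $j$'s post-transfer utility $u_j(\pi_j) - u_j(o)$ stays weakly above $i$'s post-transfer utility, and the sorted utility vector strictly leximin-increases. Your threshold-count lemma (the number of agents strictly below $a+1$ drops by exactly one, while counts below lower thresholds are unchanged) is precisely the rigorous version of the step the paper asserts in one line (``$u_i(\pi'_i) > u_i(\pi_i)$ implies $\vec{u}(\pi') \succ_{lex} \vec{u}(\pi)$''), and this is also the half of the claim that is actually invoked later in the proof of Lemma \ref{lemma:lexmin=>GEF1Tern}.

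The genuine gap is in the second set, $\{o \in \pi_i \mid u_j(o) = -1\}$, and it cannot be repaired along the lines you propose. Moving such an $o$ from $i$ to $j$ lowers $j$ by exactly $1$ and changes $i$ by $-u_i(o)$, so it is leximin-improving only when $u_i(o) = -1$. You plan to exclude the subcases $u_i(o) \in \{0,1\}$ by invoking Lemma \ref{prop:characPOTernaryUtil}, but that characterization only forces the holder's value to have the same sign as $\max_{k \in \agentSet} u_k(o)$; it is perfectly consistent with (indeed mandated by) Pareto-optimality that $i$ holds an item with $u_i(o) = 0 = \max_{k \in \agentSet} u_k(o)$ while $u_j(o) = -1$, in which case your transfer strictly worsens the leximin vector and yields no contradiction. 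In fact, this half of the claim is false as stated: take $\agentSet = \{i, j\}$ and $\objSet = \{g_1, g_2, o\}$ with $u_i \equiv 0$, $u_j(g_1) = u_j(g_2) = 1$, $u_j(o) = -1$; the unique leximin-optimal allocation is $\pi_i = \{o\}$, $\pi_j = \{g_1, g_2\}$, whose utility gap is $2$ while $\{o' \in \pi_i \mid u_j(o') = -1\} = \{o\} \neq \emptyset$. You are in good company: the paper's own proof dismisses this case as ``exactly symmetric,'' which it is not, since leximin is asymmetric here --- raising the poorer agent is an improvement, whereas lowering the richer agent while the poorer stays put never is. A correct write-up should either restrict the claim to its first set (which suffices for Lemma \ref{lemma:lexmin=>GEF1Tern}), or state the second condition with $u_i(o) = -1$ --- equivalently, by Pareto-optimality, $o$ is a chore for every agent --- in which case your Pigou--Dalton transfer does go through.
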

	\begin{claimproof}
		Assume that there is an item $o \in \pi_j$ such that $u_i(o) = 1$. Consider the allocation $\pi'$ such that $\forall k \in \agentSet \backslash \{i, j\}, \pi'_k = \pi_k$, $\pi'_i = \pi_i \cup \{o\}$, $\pi'_j = \pi_j \backslash \{o\}$. Since $u_j(\pi_j) - u_i(\pi_i) \geq 2$, $i$ appears before $j$ in $\vec{u}(\pi)$. Moreover, as $u_j(\pi_j') - u_i(\pi_i') \geq 0$, $i$ also appears before $j$ in $\vec{u}(\pi')$. However, $u_i(\pi'_i) > u_i(\pi_i)$ implies that $\vec{u}(\pi') \succ_{lex} \vec{u}(\pi)$ which contradicts the fact that $\pi$ is leximin-optimal. The case where there exists an item $o \in \pi_i$ such that $u_j(o) = -1$ is exactly symmetric. 
	\end{claimproof}
	
	Next, we show our second claim that if one agent has negative utility for her bundle, then all agent's utility is at most one more.
	\begin{claim}
		If $min_{i \in \agentSet}u_i(\pi_i) < 0$, then $\forall i \in \agentSet, 0 \leq u_i(\pi_i) - min_{i \in \agentSet}u_i(\pi_i) \leq 1$. 
	\end{claim}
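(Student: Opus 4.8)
The plan is to prove the two inequalities separately. The left inequality, $u_i(\pi_i) - \min_{k \in \agentSet} u_k(\pi_k) \geq 0$, is immediate, since $\min_{k} u_k(\pi_k)$ is by definition no larger than any individual utility. All the work lies in the right inequality $u_i(\pi_i) - \min_{k} u_k(\pi_k) \leq 1$, which I would establish by contradiction.

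Fix an agent $i^*$ attaining the minimum, write $m^* = u_{i^*}(\pi_{i^*}) = \min_{k} u_k(\pi_k) < 0$, and suppose toward a contradiction that some agent $j$ satisfies $u_j(\pi_j) \geq m^* + 2$, i.e. $u_j(\pi_j) - u_{i^*}(\pi_{i^*}) \geq 2$. Since utilities are additive and normalized ternary and $m^* < 0$, the bundle $\pi_{i^*}$ must contain at least one item $o$ with $u_{i^*}(o) = -1$ (the singleton values lie in $\{-1,0,1\}$ and sum to a negative number). Applying the previous claim to the pair $(i^*, j)$ yields $\{o' \in \pi_{i^*} \mid u_j(o') = -1\} = \emptyset$, so in particular $u_j(o) \in \{0,1\}$: the chore $o$ is not a chore for $j$.

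I would then transfer $o$ from $i^*$ to $j$, obtaining $\pi'$ that agrees with $\pi$ except $\pi'_{i^*} = \pi_{i^*} \setminus \{o\}$ and $\pi'_j = \pi_j \cup \{o\}$. Removing the chore raises $i^*$ to $u_{i^*}(\pi'_{i^*}) = m^* + 1$, while $u_j(\pi'_j) = u_j(\pi_j) + u_j(o) \geq u_j(\pi_j) \geq m^* + 2$; every other agent is unchanged. The goal is to show $\vec{u}(\pi') \succ_{lex} \vec{u}(\pi)$, contradicting leximin-optimality of $\pi$.

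The main obstacle, and the step requiring care, is establishing this leximin-dominance cleanly, because several agents may tie at the minimum value $m^*$. The key observation is that after the transfer no agent sits strictly below $m^*$ and no agent newly drops to $m^*$: $i^*$ rose to $m^*+1$, $j$ rose (or stayed) at $\geq m^*+2$, and no one decreased. Hence the number of coordinates equal to $m^*$ strictly decreased by exactly one. Formally, letting $k$ denote the number of coordinates of $\vec{u}(\pi)$ equal to $m^*$, the sorted vectors $\vec{u}(\pi)$ and $\vec{u}(\pi')$ agree on their first $k-1$ entries (all equal to $m^*$), while the $k$-th entry of $\vec{u}(\pi')$ is strictly greater than $m^* = \vec{u}(\pi)_k$; when $k=1$ the minimum itself strictly increases. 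Either way $\vec{u}(\pi') \succ_{lex} \vec{u}(\pi)$, the desired contradiction. This rules out any such $j$, so $u_i(\pi_i) \leq m^* + 1$ for all $i$, completing the proof.
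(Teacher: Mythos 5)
Your proof is correct, but it reaches the contradiction via a different key lemma than the paper. The paper does not use the pairwise claim at all here: it observes that leximin-optimality implies Pareto-optimality and applies the ternary PO characterization (Lemma~\ref{prop:characPOTernaryUtil}) to the chore $o$ held by the minimum agent, concluding that $o$ is a chore for \emph{every} agent; transferring $o$ to the better-off agent $j$ then raises the minimum agent to $m^*+1$ while lowering $j$ by exactly one, which still leaves $j$ strictly above $m^*$, giving the leximin improvement. You argue the opposite way about $u_j(o)$: invoking the pairwise claim (legitimate, since in the appendix it is proved before this one and its proof does not depend on the present claim), you get $u_j(o) \in \{0,1\}$, so your transfer never hurts $j$ at all. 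Both routes end with the same sorted-vector comparison, and your explicit treatment of ties at the minimum (the count of coordinates equal to $m^*$ drops by one and nobody falls to or below $m^*$) is more careful than the paper's one-line assertion of leximin-dominance. Two remarks: first, passing from the negation of the claim to $u_j(\pi_j) \geq m^*+2$ silently uses integrality of normalized ternary utilities; this is needed because the pairwise claim's hypothesis requires the threshold $2$, whereas the paper's PO-based route only needs $u_j(\pi_j) > m^*+1$. Second, under the contradiction hypothesis your deduction ($u_j(o) \neq -1$) and the paper's ($u_j(o) = -1$) are mutually exclusive, so combining the pairwise claim with Lemma~\ref{prop:characPOTernaryUtil} would actually finish the proof with no transfer argument at all. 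What your route buys is self-containedness --- the two claims form a chain and the PO characterization is never needed; what the paper's route buys is that the two claims stay logically independent of each other.
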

	\begin{claimproof}
		Let $k \in \agentSet$ be the agent with minimum utility in $\pi$. Assume that $u_k(\pi_k) < 0$, it means that $k$ owns at least one chore $o$ in $\pi$. Since $\pi$ is Pareto-efficient, from Lemma \ref{prop:characPOTernaryUtil} we know that $o$ is a chore for every agent in $\agentSet$. Let us assume that there exists an agent $j \in \agentSet$ such that $u_j(\pi_j) > u_k(\pi_k) + 1$, then since $o$ is also a chore for $j$, the allocation obtained by transferring $o$ from $k$ to $j$ would leximin-dominates $\pi$. This would contradict the leximin-optimality of $\pi$. 
	\end{claimproof}
	
	For the sake of contradiction, assume that $\pi$ is not GEF1, that is, there exist two groups of agents $S, T \subseteq \agentSet$, and a reallocation $\pi' \in \Pi(\pi_T, S)$ such that $\forall i \in S, \forall o \in \pi_i^- \cup \pi_i^+, u(\pi_i' \backslash \{o\}) \geq u(\pi_i \backslash \{o\})$ with one inequality being strict. Suppose without loss of generality that the utilities of the agents in $S$ (resp. $T$), written $s_1, \ldots, s_{|S|}$ (resp. $t_1, \ldots, t_{|S|}$), are ordered increasingly. From Lemma \ref{prop:characPOTernaryUtil} we know that for every agent $j \in T$ and $i \in S$, $u_i(\pi_j) \leq u_j(\pi_j)$. Hence, $t_j$ is an upper bound on the utility $i$ can receive from $j$. 
	
	To get a GEF1 violation, a reallocation of the items in $\pi_T$ should give utility at least $s_i + 1$ to every agent $i \in S$ with one agent receiving strictly more. Let us denote by $j^* \in T$ the index of the first $t_j$ such that $t_j > s_j + 1$. From the second Claim, it cannot be the case that $s_{j^*} <0 $. Let us then assume that $s_{j^*} \geq 0$. From our first Claim, we know that in this case $j^*$ does not have any item considered as goods for the agent corresponding to $s_{j^*}$, written $i^*$. Since utilities are ordered increasingly, it is also the case for every agent $j > j^*$. Hence $i^*$ can only receive goods from agents $j < j^*$. However, for every agent $j < j^*$, we have $s_j = t_j + 1$. These agents can thus not provide enough goods to all agents $i \leq i^*$ to get GEF1 violation. This contradicts the existence of such agent $j^*$, which in turns contradicts the existence of a GEF1 violation. We have thus proved that $\pi$ is both leximin-optimal and GEF1.
	
	\subsection{Example of the Ternary Flow Algorithm}
	
	Let us illustrate on an example how Algorithm \ref{algo:lexminTern} proceeds. Consider once again the instance presented in Example \ref{ex:PO + EF1 =/> GEF1}. The network flow constructed by the Nash Flow Algorithm is presented in Figure \ref{fig:exFlowAlgo}. The allocation corresponding to the minimum cost integer flow is $\pi = \tuple{\{o_1, o_2\}, \{o_3, o_4\}, \{o_5, o_6\}, \{o_7, o_8\}}$. Since all items are considered as good by at least one agent, the allocation returned by the Nash Flow Algorithm is also the one returned by the Ternary Flow Algorithm.
	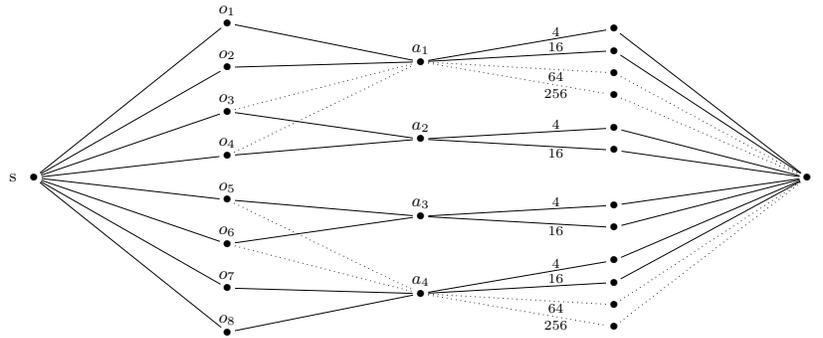
\begin{figure}
		\centering
		\resizebox{!}{14em}{
			\begin{tikzpicture}[shorten < = 4 pt, shorten > = 4pt]
			\def\hDist{3.5}
			\def\vSepO{0.8}
			\def\vSepN{1.4}
			\def\vSepT{0.4}
			
			\node[label = left:s] at (0, 0) {$\bullet$};
			\node[label = right:t] at ($(4 * \hDist, 0)$) {$\bullet$};

			\node[label = {[label distance=-0.5em]90:$o_1$}] at ($(\hDist, \vSepO / 2 + 3 * \vSepO)$) {$\bullet$};
			\draw[black] (0, 0) -- ($(\hDist, \vSepO / 2 + 3 * \vSepO)$);
			
			\node[label = {[label distance=-0.5em]90:$o_2$}] at ($(\hDist, \vSepO / 2 + 2 * \vSepO)$) {$\bullet$};
			\draw[black] (0, 0) -- ($(\hDist, \vSepO / 2 + 2 * \vSepO)$);
			
			\node[label = {[label distance=-0.5em]90:$o_3$}] at ($(\hDist, \vSepO / 2 + \vSepO)$) {$\bullet$};
			\draw[black] (0, 0) -- ($(\hDist, \vSepO / 2 + \vSepO)$);
			
			\node[label = {[label distance=-0.5em]90:$o_4$}] at ($(\hDist, \vSepO / 2)$) {$\bullet$};
			\draw[black] (0, 0) -- ($(\hDist, \vSepO / 2)$);
			
			\node[label = {[label distance=-0.5em]90:$o_5$}] at ($(\hDist, - \vSepO / 2)$) {$\bullet$};
			\draw[black] (0, 0) -- ($(\hDist, -\vSepO / 2)$);
			
			\node[label = {[label distance=-0.5em]90:$o_6$}] at ($(\hDist, - \vSepO / 2 - \vSepO)$) {$\bullet$};
			\draw[black] (0, 0) -- ($(\hDist, -\vSepO / 2 - \vSepO)$);
			
			\node[label = {[label distance=-0.5em]90:$o_7$}] at ($(\hDist, - \vSepO / 2 - 2 * \vSepO)$) {$\bullet$};
			\draw[black] (0, 0) -- ($(\hDist, -\vSepO / 2 - 2 * \vSepO)$);
			
			\node[label = {[label distance=-0.5em]90:$o_8$}] at ($(\hDist, - \vSepO / 2 - 3 * \vSepO)$) {$\bullet$};
			\draw[black] (0, 0) -- ($(\hDist, -\vSepO / 2 - 3 * \vSepO)$);

			\node[label = {[label distance=-0.5em]90:$a_1$}] at ($(2 * \hDist, \vSepN / 2 + \vSepN)$) {$\bullet$};
			\draw[black] ($(\hDist, \vSepO / 2 + 3 * \vSepO)$) -- ($(2 * \hDist, \vSepN / 2 + \vSepN)$);
			\draw[black] ($(\hDist, \vSepO / 2 + 2 * \vSepO)$) -- ($(2 * \hDist, \vSepN / 2 + \vSepN)$);
			\draw[dotted] ($(\hDist, \vSepO / 2 + \vSepO)$) -- ($(2 * \hDist, \vSepN / 2 + \vSepN)$);
			\draw[dotted] ($(\hDist, \vSepO / 2)$) -- ($(2 * \hDist, \vSepN / 2 + \vSepN)$);
			\node[] at ($(3 * \hDist, \vSepN / 2 + \vSepN + \vSepT / 2 + \vSepT)$) {$\bullet$};
			\node[] at ($(3 * \hDist, \vSepN / 2 + \vSepN + \vSepT / 2)$) {$\bullet$};
			\node[] at ($(3 * \hDist, \vSepN / 2 + \vSepN + -\vSepT / 2)$) {$\bullet$};
			\node[] at ($(3 * \hDist, \vSepN / 2 + \vSepN + -\vSepT / 2 - \vSepT)$) {$\bullet$};
			\draw[black] ($(2 * \hDist, \vSepN / 2 + \vSepN)$) -- ($(3 * \hDist, \vSepN / 2 + \vSepN + \vSepT / 2 + \vSepT)$);
			\draw[black] ($(2 * \hDist, \vSepN / 2 + \vSepN)$) -- ($(3 * \hDist, \vSepN / 2 + \vSepN + \vSepT / 2)$);
			\draw[dotted] ($(2 * \hDist, \vSepN / 2 + \vSepN)$) -- ($(3 * \hDist, \vSepN / 2 + \vSepN + -\vSepT / 2)$);
			\draw[dotted] ($(2 * \hDist, \vSepN / 2 + \vSepN)$) -- ($(3 * \hDist, \vSepN / 2 + \vSepN + -\vSepT / 2 - \vSepT)$);
			\draw[black] ($(3 * \hDist, \vSepN / 2 + \vSepN + \vSepT / 2 + \vSepT)$) -- ($(4 * \hDist, 0)$);
			\draw[black] ($(3 * \hDist, \vSepN / 2 + \vSepN + \vSepT / 2)$) -- ($(4 * \hDist, 0)$);
			\draw[dotted] ($(3 * \hDist, \vSepN / 2 + \vSepN + -\vSepT / 2)$) -- ($(4 * \hDist, 0)$);
			\draw[dotted] ($(3 * \hDist, \vSepN / 2 + \vSepN + -\vSepT / 2 - \vSepT)$) -- ($(4 * \hDist, 0)$);
			\node[] at ($(2.7 * \hDist, \vSepN / 2 + \vSepN + \vSepT / 2 + \vSepT - 0.05)$) {\scriptsize{$4$}};
			\node[] at ($(2.7 * \hDist, \vSepN / 2 + \vSepN + 4/5 * \vSepT - 0.05)$) {\scriptsize{$16$}};
			\node[] at ($(2.7 * \hDist, \vSepN / 2 + \vSepN - 8/5 * \vSepT / 2 + 0.05)$) {\scriptsize{$64$}};
			\node[] at ($(2.7 * \hDist, \vSepN / 2 + \vSepN - \vSepT / 2 - \vSepT + 0.02)$) {\scriptsize{$256$}};

			\node[label = {[label distance=-0.5em]90:$a_2$}] at ($(2 * \hDist, \vSepN / 2)$) {$\bullet$};
			\draw[black] ($(\hDist, \vSepO / 2 + \vSepO)$) -- ($(2 * \hDist, \vSepN / 2)$);
			\draw[black] ($(\hDist, \vSepO / 2)$) -- ($(2 * \hDist, \vSepN / 2 )$);
			\node[] at ($(3 * \hDist, \vSepN / 2 + \vSepT / 2)$) {$\bullet$};
			\node[] at ($(3 * \hDist, \vSepN / 2 - \vSepT / 2)$) {$\bullet$};
			\draw[black] ($(2 * \hDist, \vSepN / 2)$) -- ($(3 * \hDist, \vSepN / 2 + \vSepT / 2)$);
			\draw[black] ($(2 * \hDist, \vSepN / 2)$) -- ($(3 * \hDist, \vSepN / 2 - \vSepT / 2)$);
			\draw[black] ($(3 * \hDist, \vSepN / 2 + \vSepT / 2)$) -- ($(4 * \hDist, 0)$);
			\draw[black] ($(3 * \hDist, \vSepN / 2 - \vSepT / 2)$) -- ($(4 * \hDist, 0)$);
			\node[] at ($(2.7 * \hDist, \vSepN / 2 + \vSepT / 2 + 0.06)$) {\scriptsize{$4$}};
			\node[] at ($(2.7 * \hDist, \vSepN / 2 - \vSepT / 2 - 0.06)$) {\scriptsize{$16$}};

			\node[label = {[label distance=-0.5em]90:$a_3$}] at ($(2 * \hDist, - \vSepN / 2)$) {$\bullet$};
			\draw[black] ($(\hDist, -\vSepO / 2 - \vSepO)$) -- ($(2 * \hDist, -\vSepN / 2 )$);
			\draw[black] ($(\hDist, -\vSepO / 2)$) -- ($(2 * \hDist, -\vSepN / 2 )$);
			\node[] at ($(3 * \hDist, -\vSepN / 2 + \vSepT / 2)$) {$\bullet$};
			\node[] at ($(3 * \hDist, -\vSepN / 2 - \vSepT / 2)$) {$\bullet$};
			\draw[black] ($(2 * \hDist, -\vSepN / 2)$) -- ($(3 * \hDist, -\vSepN / 2 + \vSepT / 2)$);
			\draw[black] ($(2 * \hDist, -\vSepN / 2)$) -- ($(3 * \hDist, -\vSepN / 2 - \vSepT / 2)$);
			\draw[black] ($(3 * \hDist, -\vSepN / 2 + \vSepT / 2)$) -- ($(4 * \hDist, 0)$);
			\draw[black] ($(3 * \hDist, -\vSepN / 2 - \vSepT / 2)$) -- ($(4 * \hDist, 0)$);
			\node[] at ($(2.7 * \hDist, -\vSepN / 2 + \vSepT / 2 + 0.06)$) {\scriptsize{$4$}};
			\node[] at ($(2.7 * \hDist, -\vSepN / 2 - \vSepT / 2 - 0.06)$) {\scriptsize{$16$}};

			\node[label = {[label distance=-0.5em]90:$a_4$}] at ($(2 * \hDist, -\vSepN / 2 - \vSepN)$) {$\bullet$};
			\draw[black] ($(\hDist, -\vSepO / 2 - 3 * \vSepO)$) -- ($(2 * \hDist, -\vSepN / 2 - \vSepN)$);
			\draw[black] ($(\hDist, -\vSepO / 2 - 2 * \vSepO)$) -- ($(2 * \hDist, -\vSepN / 2 - \vSepN)$);
			\draw[dotted] ($(\hDist, -\vSepO / 2 - \vSepO)$) -- ($(2 * \hDist, -\vSepN / 2 - \vSepN)$);
			\draw[dotted] ($(\hDist, -\vSepO / 2)$) -- ($(2 * \hDist, -\vSepN / 2 - \vSepN)$);
			\node[] at ($(3 * \hDist, -\vSepN / 2 - \vSepN + \vSepT / 2 + \vSepT)$) {$\bullet$};
			\node[] at ($(3 * \hDist, -\vSepN / 2 - \vSepN + \vSepT / 2)$) {$\bullet$};
			\node[] at ($(3 * \hDist, -\vSepN / 2 - \vSepN - \vSepT / 2)$) {$\bullet$};
			\node[] at ($(3 * \hDist, -\vSepN / 2 - \vSepN - \vSepT / 2 - \vSepT)$) {$\bullet$};
			\draw[black] ($(2 * \hDist, -\vSepN / 2 - \vSepN)$) -- ($(3 * \hDist, -\vSepN / 2 - \vSepN + \vSepT / 2 + \vSepT)$);
			\draw[black] ($(2 * \hDist, -\vSepN / 2 - \vSepN)$) -- ($(3 * \hDist, -\vSepN / 2 - \vSepN + \vSepT / 2)$);
			\draw[dotted] ($(2 * \hDist, -\vSepN / 2 - \vSepN)$) -- ($(3 * \hDist, -\vSepN / 2 - \vSepN + -\vSepT / 2)$);
			\draw[dotted] ($(2 * \hDist, -\vSepN / 2 - \vSepN)$) -- ($(3 * \hDist, -\vSepN / 2 - \vSepN + -\vSepT / 2 - \vSepT)$);
			\draw[black] ($(3 * \hDist, -\vSepN / 2 - \vSepN + \vSepT / 2 + \vSepT)$) -- ($(4 * \hDist, 0)$);
			\draw[black] ($(3 * \hDist, -\vSepN / 2 - \vSepN + \vSepT / 2)$) -- ($(4 * \hDist, 0)$);
			\draw[dotted] ($(3 * \hDist, -\vSepN / 2 - \vSepN + -\vSepT / 2)$) -- ($(4 * \hDist, 0)$);
			\draw[dotted] ($(3 * \hDist, -\vSepN / 2 - \vSepN + -\vSepT / 2 - \vSepT)$) -- ($(4 * \hDist, 0)$);
			\node[] at ($(2.7 * \hDist, -\vSepN / 2 - \vSepN + \vSepT / 2 + \vSepT - 0.05)$) {\scriptsize{$4$}};
			\node[] at ($(2.7 * \hDist, -\vSepN / 2 - \vSepN + 4/5 * \vSepT - 0.05)$) {\scriptsize{$16$}};
			\node[] at ($(2.7 * \hDist, -\vSepN / 2 - \vSepN - 8/5 * \vSepT / 2 + 0.05)$) {\scriptsize{$64$}};
			\node[] at ($(2.7 * \hDist, -\vSepN / 2 - \vSepN - \vSepT / 2 - \vSepT + 0.02)$) {\scriptsize{$256$}};
			\end{tikzpicture}}
		\caption{Flow network constructed by the Nash Flow Algorithm on the instance described in Example \ref{ex:PO + EF1 =/> GEF1}. Plain lines represent a minimum cost flow and dotted lines any other edge. The cost of the edges are indicated next to them, 0 costs have been omitted.}
		\label{fig:exFlowAlgo}
	\end{figure}

	\section{Testing GEF1 is coNP-complete}
	
	All the reductions are from the \textsc{3-partition} problem, shown to be strongly NP-complete by \citet{GaJo75}. The problem is stated as follow:
	\begin{center}
		\begin{tabular}{rl}
			\toprule
			\multicolumn{2}{c}{\textsc{3-Partition}}\\
			\midrule
			\textbf{Instance:} & A multi-set of $3m$ numbers $X = \{x_1, \ldots, x_{3m}\}$ such that:\\
			& $\forall x \in X, 1/4 < x < 1/2$ and $\sum_{x \in X} x = m$.\\
			\textbf{Question:} & Is there a partition $(X_i)_{i \in \llbracket 1, m \rrbracket}$ of $X$ s.t. $\forall i, \sum_{x \in X_i} = 1$ ? \\
			\bottomrule
		\end{tabular}
	\end{center}
	
	\subsection{Proof of Theorem \ref{thm:complexityTestingGEF1} for \textsc{is-\GFoGood{}}}
	
	Let us consider $X = \{x_1, \ldots, x_{3m}\}$ an instance of the \textsc{3-Partition} problem. We construct an instance of the \textsc{is-\GFoGood{}} problem as follows. The set of agents is $\agentSet = \{a_1, \ldots, a_m, b\}$, the set of items is $\objSet = \{g_1, \ldots, g_m\} \cup \{h_i^j \mid i, j \in \llbracket 1, m \rrbracket, i \neq j \} \cup \{l_1, \ldots, l_{3m}\} \cup \{l_1^*, l_2^*\}$ and the preferences of the agent are given in the following table:
	\begin{center}
		\vspace{-0.5cm}
		\resizebox{\linewidth}{!}{
			$\begin{array}{c|ccc|ccccccc|ccc|cc}
			& g_1 & \ldots & g_m & h_1^2 & \ldots & h_1^m & \ldots & h_m^1 & \ldots & h_m^{m - 1} & l_1 & \ldots & l_{3m} & l_1^* & l_2^* \\
			\hline
			a_1 & \boxed{m + 1 - \epsilon} & 0 & 0 & \frac{m}{m - 1} & \frac{m}{m - 1} & \frac{m}{m - 1} & 0 & \boxed{0} & 0 & 0 & x_1 & \cdots & x_{3m} & \boxed{0} & 0 \\
			\vdots & 0 & \ddots & 0 & 0 & 0 & 0 & \ddots & 0 & 0 & 0 & x_1 & \cdots & x_{3m} & 0 & 0 \\
			a_m & 0 & 0 & \boxed{m + 1 - \epsilon} & 0 & 0 & \boxed{0} & 0 & \frac{m}{m - 1} & \frac{m}{m - 1} & \frac{m}{m - 1} & x_1 & \cdots & x_{3m} & 0 & \boxed{0} \\
			b & 0 & 0 & 0 & 0 & 0 & 0 & 0 & 0 & 0 & 0 & \boxed{x_1} & \boxed{\cdots} & \boxed{x_{3m}} & m & m
			\end{array}$}
	\end{center}
	where $\epsilon > 0$ is a constant small enough.
	
	Let us explain these preferences in detail. Agent $a_i$, $i \in \llbracket 1, m \rrbracket$, gives value $m + 1 - \epsilon$ to good $g_i$ and 0 to the other ``g'' goods. She values $\frac{m}{m - 1}$ every $h_i^j$ good and 0 the other ``h'' goods. Her valuation for ``l'' goods follows the value of the number in $X$ and finally $l_1^*$ and $l_2^*$ provide her no value. Agent $b$ only consider goods $l_1, \ldots, l_m, l_1^*$ and $l_2^*$ that are respectively valued as in $X$, $m$ and $m$.
	
	The initial allocation $\pi \in \Pi(\objSet, \agentSet)$, given as a entry of the \textsc{is-\GFoGood{}} problem, is presented by the boxed items in the previous table. It is defined as follow:
	\begin{align*}
	\pi_{a_1} & = \{g_1\} \cup \{h_i^1 \mid j \in \llbracket 2, m \rrbracket\} \cup \{l^*_1\}, \\
	\pi_{a_m} & = \{g_m\} \cup \{h_j^m \mid j \in \llbracket 1, m - 1 \rrbracket\} \cup \{l^*_2\}, \\
	\pi_{a_i} & = \{g_i\} \cup \{h_j^i \mid j \in \llbracket 1, m \rrbracket, j \neq i\}, \forall i \in \llbracket 2, m - 1 \rrbracket, \\
	\pi_{b} & = \{l_1, \ldots, l_{3m}\}.
	\end{align*}
	
	Next, we prove that allocation $\pi$ violates GEF1 if and only if there exists a partition $(X_i)_{i \in \llbracket 1, m \rrbracket}$ of $X$ satisfying the \textsc{3-Partition} conditions.
	
	We first show that if the allocation $\pi$ is not GEF1, then the two groups $S$ and $T$ witnessing this violation can only be $\agentSet$, that is $S = T = \agentSet$. To do so, let us consider any two groups $S$ and $T$ and a reallocation $\pi' \in \Pi(\pi_T, S)$ such that:
	$$\forall a \in S, \forall o \in \pi'_a, u_a(\pi_a'\backslash\{o\}) \geq u_a(\pi_a),$$
	with at least one inequality being strict. Such $S, T$ and $\pi'$ define a violation of GEF1.
	
	First note that $S = \{b\}$ is not a suitable witness as $b$ can not improve her utility by receiving one agent's entire bundle (remember that $|S| = |T|$).
	
	Let us then assume that there exist an agent $a_i$ in $S$, $i \in \llbracket 1, m \rrbracket$. Observe that if $\pi'_{a_i} = \objSet \backslash \{l_j \mid j \in \llbracket 1, 3m \rrbracket\}$, $u_{a_i} (\pi'_{a_i} \backslash \{g_i\}) = m$ which is strictly lower than her utility in $\pi$. Hence, to be better off in $\pi'$, $a_i$ should receive a subsets of the ``l'' items. This yields that we should have $b \in T$. Note also that if $\pi'_{a_i} = \{g_i\} \cup \{l_j \mid j \in \llbracket 1, 3m \rrbracket\}$, $u_{a_i} (\pi'_{a_i} \backslash \{g_i\}) = m$. Hence there should exist another agent $a_j$ in $T$, $j \in \llbracket 1, m \rrbracket, j \neq i$.
	
	For now, we have proved that there should be two agents in $T$, $a_j$ and $b$, and one in $S$, $a_i$. As $|T| = |S|$, another agent should be in $S$. 
	\begin{itemize}
		\item If an agent $a_{i'}$ is in $S$, $i'\in \llbracket 1, m \rrbracket, i' \neq i$, then $a_i$ and $a_{i'}$ have to share the ``l'' items than give a utility of $m$ in total. Hence, to be better off, they should receive some $h_i^k$ and $h_{i'}^{k'}$ items held by other ``a'' agents who should then be in $T$. This argument can then be iterated until having all ``a'' are in $T$ and in $S$. 
		\item If agent $b$ is in $S$, then to accept to give her ``l'' items, she should receive $l_1^*$ and $l_2^*$. Agents $a_1$ and $a_m$ should thus be in $T$. The previous argument can then be applied. This yields to the fact that the only possible groups witnessing a violation of GEF1 are $S = T = \agentSet$.
	\end{itemize}
	
	Let us now show that there exists a reallocation $\pi' \in \Pi(\objSet, S)$ that lead to a violation of GEF1, if and only if there exists a partition $(X_i)_{i \in m}$ of $X$ that respects the \textsc{3-Partition} conditions. Observe that the set of items $O_i = \{g_i\} \cup \{h_i^j \mid j \in \llbracket 1, m \rrbracket\}$ is given positive value only by agent $a_i$, hence $O_i \subseteq \pi'_i$. $O_i \backslash \{g_i\}$ brings $m - 1$ utility to $a_i$, hence to be better off, she needs to receive at least 1 additional unit of utility from the ``l'' items. Overall the ``l'' items can be divided so that it brings an additional 1 unit of utility to every agent $a_i$ if and only if there exist a suitable partition of $X$. If such partition $(X_i)_{i \in \llbracket 1, m \rrbracket}$ exists, then the reallocation $\pi'$ defined by:
	\begin{align*}
	\pi_{a_i}' & = \{g_i\} \cup \{h_i^j \mid j \in \llbracket 1, m \rrbracket, j \neq i\} \cup X_i, \forall i \in \llbracket 1, m \rrbracket, \\
	\pi_{b}' & = \{l_1^*, l_2^*\},
	\end{align*}
	is a witness of the violation of GEF1. Otherwise, $\pi$ satisfies GEF1.
	
	\medskip
	
	Finally this reduction is clearly done in polynomial-time which concludes the proof.
	
	\subsection{Proof of Theorem \ref{thm:complexityTestingGEF1} for \textsc{is-\GFoChore{}}}
	
	Let $X = \{x_1, \ldots, x_{3m}\}$ be an instance of the \textsc{3-Partition} problem. We present in the following its corresponding instance $(I, \pi)$ of the \textsc{is-\GFoChore{}} problem. The set of chores is $\objSet = \{g_1, \ldots, g_m\} \cup \{h_1, \ldots, h_m\} \cup \{l_1, \ldots, l_{3m}\} \cup \{o_1, \ldots, o_{2m}\}$ and the set of agents $\agentSet = \{a_1, \ldots, a_m\} \cup \{b_1, \ldots, b_m\}$. The utilities of the singletons are as follows.
	\begin{center}
		\vspace{-0.4cm}
		\resizebox{\linewidth}{!}{
			$\begin{array}{c|ccc|ccc|ccccccc|cccccc}
			& g_1 & \ldots & g_m & h_1 & \ldots & h_m & l_1 & l_2 & l_3 & \ldots & l_{3m - 2} & l_{3m - 1} & l_{3m} & o_1 & \ldots & o_m & o_{m + 1} & \ldots & o_{2m} \\
			\hline
			a_1 & \boxed{-m - \epsilon} & -M & -M & \boxed{-1 - \epsilon} & -M & 0 & -x_1 & -x_2 & -x_3 & \cdots & -x_{3m - 2} & -x_{3m - 1} & -x_{3m} & \boxed{0} & -M & -M & -M & -M & -M  \\
			\vdots & -M & \ddots & -M & -M & \ddots & -M & -x_1 & -x_2 & -x_3 & \cdots & -x_{3m - 2} & -x_{3m - 1} & -x_{3m} & -M & \ddots & -M & -M & -M & -M \\
			a_m & -M & -M & \boxed{-m - \epsilon} & -M & -M & \boxed{-1 - \epsilon} & -x_1 & -x_2 & -x_3 & \cdots & -x_{3m - 2} & -x_{3m - 1} & -x_{3m} & -M & -M & \boxed{0} & -M & -M & -M  \\
			\hline
			b_1 & -x_2 - x_3 & -M & -M & -M & -M & -M & \boxed{-x_1} & \boxed{-x_2} & \boxed{-x_3} & -M & -M & -M & -M & -M & -M & -M & \boxed{0} & -M & -M \\
			\vdots & -M & \ddots & -M & -M & -M & -M & -M & -M  & -M & \ddots & -M & -M & -M & -M & -M & -M & -M & \ddots & -M \\
			b_m & -M & -M & -x_{3m - 1} - x_{3m} & -M & -M & -M & -M & -M & -M & -M & \boxed{-x_{3m - 2}} & \boxed{-x_{3m - 1}} & \boxed{-x_{3m}} & -M & -M & -M & -M & -M & \boxed{0}
			\end{array}$}
	\end{center}
	where $\epsilon > 0$ is a constant small enough, $M$ is a constant greater than $m + 1$ and the $x_i$ are assumed to be ordered in a decreasing order: $\forall i \in \llbracket 1, 3m \rrbracket, x_i \geq x_{i + 1}$.
	
	Let us explain these preferences in detail. Agent $a_i$, $i \in \llbracket 1, m \rrbracket$, gives value $-m - \epsilon$ to chore $g_i$ and $-M$ ot any other ``g'' chore. Moreover, she values $-1-\epsilon$ chore $h_i$, 0 chore $h_{i - 1}$ (chore $h_m$ for agent $a_1$) and $-M$ the other ``h'' chores. Her valuation for ``l'' chores follows the opposite of the values in $X$. Finally, her utility for ``o'' chores is 0 for $o_i$ and $-M$ for the others.
	
	Agent $b_i, i \in \llbracket 1, m \rrbracket$, values $-x_{3i - 1} - x_{3_i}$ chores $g_i$, and $-M$ all other ``g'' chores. All ``h'' chores give her $-M$ utility. Chores $l_{3i - 2}, l_{3i - 1}$ and $l_{3i}$ respectively provide her $x_{3i - 2}, x_{3i - 1}$ and $x_{3i}$ utility while she values $-M$ any other ``l'' chore.  Finally her utility for ``o'' chores is 0 for $o_{i + m}$ and $-M$ for the others.
	
	The initial allocation $\pi \in \Pi(\objSet, \agentSet)$, given as a entry of the \textsc{is-\GFoChore{}} problem and represented by the boxed items in the previous table, is defined as follow:
	\begin{align*}
	\pi_{a_i} & = \{g_i\} \cup \{h_i\} \cup \{o_i\}, \forall i \in \llbracket 1, m \rrbracket, \\
	\pi_{b_i} & = \{l_{3i - 2}, l_{3i - 1}, l_{3i}\} \cup \{o_{m + i}\}, \forall i \in \llbracket 1, m \rrbracket.
	\end{align*}
	
	Next, we prove that allocation $\pi$ violates GEF1 if and only if there exists a partition $(X_i)_{i \in \llbracket 1, m \rrbracket}$ of $X$ satisfying the \textsc{3-Partition} conditions.
	
	We first show that if the allocation $\pi$ is not GEF1, then the two groups $S$ and $T$ witnessing this violation can only be $\agentSet$, that is $S = T = \agentSet$. To do so, let us consider any two groups $S$ and $T$ and a reallocation $\pi' \in \Pi(\pi_T, S)$ such that:
	\begin{align}
	\label{eq:complexChoreLine1}
	\forall a \in S, \forall o \in \pi_a, u_a(\pi'_a) \geq u_a(\pi_a \backslash \{o\}),
	\end{align}
	with at least one inequality being strict. Such $S, T$ and $\pi'$ define a violation of GEF1. Note that as \eqref{eq:complexChoreLine1} should hold for any item $o \in \pi_a$, it should hold in particular if $o$ is the worst chore in $\pi_a$ that is $g_i$ for agents $a_i$ and $l_{3i - 2}$ for agent $b_i$ (remember that we assumed $x_i$ are decreasingly ordered). Hence \eqref{eq:complexChoreLine1} can be reformulated as:
	\begin{align}
	\label{eq:complexChoreLine2}
	\forall a \in S, \left\{\begin{array}{ll}
	u_a(\pi'_a) \geq -1 - \epsilon & \text{if } a \in \{a_1, \ldots, a_m\}, \\
	u_a(\pi'_a) \geq -x_{3i - 1} - x_{3i} & \text{if } a = b_i, i \in \llbracket 1, m \rrbracket,
	\end{array}\right.
	\end{align}
	with one inequality being strict. 
	
	\medskip
	
	Let us show that \eqref{eq:complexChoreLine2} can be satisfied if and only if $S = T = \agentSet$. To do so, we first prove that $S = T$. Then, we show that the number of ``a'' agents in $T$ is equal to the number of ``b'' agents in $T$. Finally we prve that if agent $a_i$ (resp. $a_1$) is in $T$, then agent $a_{i - 1}$ (resp. $a_m$) should also be in $T$, proving that $\{a_1, \ldots, a_m\} \in T$. All these facts put together show the claim.
	
	Consider agent $c \in T$ that can either be an ``a'' or ``b'' agent. Let $k$ be the index such that $k = i$ if $c = a_i$ and $k = i + m$ if $c = b_i$. Observe that chore $o_k$ is in $\pi_c$ and it can only be allocated to agent $c$ for \eqref{eq:complexChoreLine2} to be satisfied as every other agent have utility $-M$ for it. We have thus $T \subseteq S$, and as $|S| = |T|$ we have $S = T$.
	
	Let us then introduce two notations, $n_a$ and $n_b$, respectively corresponding to the number of ``a'' agents and ``b'' agent in $T$, that is $n_a = |\{a_1, \ldots, a_m\} \cap T|$ and $n_b = |T| - n_a$. Note that if $a_i \in T$ then $b_i$ should be in $S$ as $g_i \in \pi_{a_i}$ can only be allocated to $b_i$ to satisfy \eqref{eq:complexChoreLine2}. Remember that $S = T$, hence $b_i \in T$ and thus $n_a \leq n_b$.
	
	Moreover, observe that $T \subseteq \{b_1, \ldots, b_m\}$ would violated \eqref{eq:complexChoreLine2} as ``b'' agents are interested only in the ``l'' chores they own among all ``l'' chores. Hence, no reallocation among only ``b'' agents can be improving. Let us then assume that there exists agent $a_i$ in $T$. As already mentioned, we have $g_i \in \pi'_{b_i}$, hence $\pi'_{b_i} = \{g_i, o_{i + m}\}$ and \eqref{eq:complexChoreLine2} for $b_i$ is at equality. All ``l'' chores in $\pi_T$ should then be reallocated to ``a'' agents as ``b'' can not received any additional chore. Remember that $\forall x \in X, 1/4 < x < 1/2$, hence any subset of $X$ of size strictly greater than 4 has a sum strictly greater than 1. For a suitable $\epsilon$ it is then impossible to reallocate more than $3$ ``l'' items to an ``a'' agents without its utility being below $-1 - \epsilon$. This implies that $n_b \leq n_a$ hence $n_a = n_b$. 
	
	In addition, there are exactly $3n_b$ ``l'' chores in $\pi_T$ and as $n_a = n_b$, each ``a'' agent should be reallocated exactly 3 ``l'' items. It is then not possible for agent $a_i$ to receive chore $h_i$, this chore should then be allocated to agent $a_{i - 1}$ ($a_m$ for $a_i = a_1$). Hence, if $a_i \in T$ we should have $a_{i - 1} \in S$ ($a_m$ for $a_i = a_1$) and thus in $T$ as $S = T$. This yields that all ``a'' agents are in $T$ and thus also all ``b'' agents as $n_a = n_b$. We have thus $S = T = \agentSet$.
	
	\medskip
	
	We now prove that for $S = T = \agentSet$, there exist a reallocation $\pi'$ satisfying \eqref{eq:complexChoreLine2} if and only if there exists a partition $(X_i)_{i \in \llbracket 1, m \rrbracket}$ of $X$ satisfying the conditions of the \textsc{3-Partition} problem.
	
	Note that each $g_i$ chore should be allocated to $b_i$ in $\pi'$, hence all ``l'' chores should be divided among ``a'' agents. ``h'' chores are allocated to ``a'' agents receiving 0 utility for it, similar reallocation is done for ``o'' chores. Hence \eqref{eq:complexChoreLine2} is satisfied if and only if it is possible to divide the ``l'' items into $m$ parts of sum smaller than $1 + \epsilon$, that is of sum 1. For a suitable $\epsilon$ this is equivalent to the existence of a partition of $X$ satisfying the conditions of the \textsc{3-Partition} problem. 
	
	\medskip
	
	Finally this reduction is clearly done in polynomial-time which concludes the proof.
	
	\subsection{Proof of Theorem \ref{thm:complexityTestingGEF1} for \textsc{is-GEF1}}
	
	Since any instance of the \textsc{is-\GFoGood{}} or \textsc{is-\GFoChore{}} problems is also an instance of the \textsc{is-GEF1} problem, the previous also prove that \textsc{is-GEF1} is coNP-complete.
	
\end{document}